\documentclass[smallextended,final,numbook,envcountsame]{svjour3}
\bibliographystyle{plain}% the recommended bibstyle
\usepackage{times}
\usepackage[T1]{fontenc}   
\usepackage{amsmath}
\usepackage{amssymb}
\usepackage{graphics,graphicx}
\newcommand{\junk}[1]{}

\usepackage{subcaption}

\captionsetup{compatibility=false}
%\pagestyle{plain}

%\DeclareMathOperator{\bin}{bin}

%%%%%%%%%%%%%%%%%%%%%%%%%%%%%%%%%%%%%%%%%%%%%%%%%%%%%%%%%%%%%%%%%%%%%%%%%%%%%%

\title{Efficiently Correcting Matrix Products
\thanks{A preliminary, short
version of this paper with weaker bounds
in the randomized setting has appeared in
\emph{Proceedings of 
25th International Symposium on Algorithms and Computation
(ISAAC 2014),  Lecture Notes in Computer Science,
Volume 8889, Jeonju, Korea, December 15-17, 2014.}}}
\author{
Leszek G\k{a}sieniec 
\and
Christos Levcopoulos
\and
Andrzej Lingas
\and
Rasmus Pagh
\and
Takeshi Tokuyama 
\institute{L. G\k{a}sieniec \at
Department of Computer Science, University of Liverpool, Ashton
Street, L69 38X, U.K.
\newline
\email{L.A.Gasieniec@liverpool.ac.uk}
\and 
C. Levcopoulos \and A. Lingas \at
Department of Computer Science, Lund University, 22100 Lund, Sweden. 
\newline
\email{Christos.Levcopoulos@cs.lth.se}
\email{Andrzej.Lingas@cs.lth.se}
\and
R. Pagh \at
Theoretical Computer Science section, IT University of Copenhagen, Denmark
\email{pagh@itu.dk}
\and T. Tokuyama \at
Graduate School of Information Sciences, Sendai,Tohoku University,
Japan.
\newline
\email{tokuyama@dais.is.tohoku.ac.jp}
}
}
\titlerunning{Efficiently Correcting Matrix Products}

\authorrunning{L. G\k{a}sieniec \emph{et al.}}
\date{}
\journalname{Algorithmica}

%%%%%%%%%%%%%%%%%%%%%%%%%%%%%%%%%%%%%%%%%%%%%%%%%%%%%%%%%%%%%%%%%%%%%%%%%%%%%%

\begin{document}
\maketitle
\begin{abstract}
We study the problem of efficiently correcting an erroneous
product of two $n\times n$ matrices over a ring. Among other things,
we provide a randomized 
algorithm 
for correcting a matrix product with at most~$k$ erroneous entries
running in 
\newline
$\tilde{O}(n^2+kn)$ time
and a deterministic $\tilde{O}(kn^2)$-time algorithm for this
problem (where the notation $\tilde{O}$ suppresses polylogarithmic
terms in $n$ and~$k$).
\end{abstract}

\keywords{ Matrix multiplication
\and Matrix product verification
\and Matrix product correction
\and Randomized algorithms
\and Time complexity}

%%%%%%%%%%%%%%%%%%%%%%%%%%%%%%%%%%%%%%%%%%%%%%%%%%%%%%%%%%%%%%%%%%%%%%%%%%%%%%
\section{Introduction}

Matrix multiplication is a basic operation used in many scientific and
engineering applications. There are several potential reasons for erroneous
results of computation, in particular erroneous matrix products.  They
include software bugs, computational errors by logic circuits and
bit-flips in memory. Or, if the computation is performed by remote
computers or by parallel processors, some errors 
might be introduced due to faulty communication.

In 1977, Freivalds presented a randomized algorithm for verifying if a
matrix $C'$ is the matrix product of two $n\times n$ matrices $A$ and
$B,$ running in $O(n^2)$ time \cite{F77}.  His algorithm has been up
today one of the most popular examples showing the power of
randomization.

In spite of extensive efforts of the algorithmic community to
derandomize this algorithm without substantially increasing its time complexity,
one has solely succeeded partially, either decreasing the number of
random bits to a logarithmic one \cite{CK97,KS93,NN93} or using
exponentially large numbers and the unrealistic
BSS computational model \cite{KW14}.  One can argue that the latter
solutions in different ways hide additional $O(n)$ factors.
By the way, if one can use quantum devices then even an
$O(n^{5/3})$-time verification
of an $n\times n$ matrix product over an integral
domain is possible \cite{BS05}.

Interestingly, the problem of verifying matrix products
over the $(\min,+ )$ semi-ring seems to be much harder 
comparing to that
over an arbitrary ring.
Namely, it admits a truly subcubic algorithm if and only
if there is a truly subcubic algorithm for the all-pairs
shortest path problem on weighted digraphs (APSP) \cite{VW10}.

Freivalds' algorithm has also pioneered a new subarea of the so called
certifying algorithms \cite{MM11}.  Their purpose is to provide
besides the output a certificate or easy to verify proof that the
output is correct. The computational cost of the verification should
be substantially lower than that incurred by recomputing the output
(perhaps using a different method) from scratch.

In 1977, when Freivalds published his algorithm,
the asymptotically fastest known algorithm
for arithmetic matrix multiplication
was that due to Strassen running in $O(n^{2.81})$ 
time \cite{S69}. Since then the asymptotic running time
of fast matrix multiplication algorithms
has been gradually improved to $O(n^{2.3728639})$ 
at present \cite{CW,LG14,Vassilevska12} which is still substantially super-quadratic.

In this paper, we go one step further and consider
a more complex problem of not only
verifying a computational result but
also correcting it if necessary. Similarly
as Freivalds, as a subject of our study we
choose matrix multiplication.

Our approach is very different from that in fault tolerant setting,
where one enriches input in order to control the correctness of
computation (e.g., by check sums in the so called ABFT method)
\cite{DK11,WD11,WD14}.  Instead, we use here an approach resembling
methods from Combinatorial Group Testing where one keeps testing
larger groups of items in search for multiple targets, see,
e.g. \cite{DGV05,DH93}.

First, we provide a simple deterministic algorithm for correcting an
$n\times n$ matrix product $C'$ over a ring, with at most one
erroneous entry, in $O(n^2)$ time. It can be regarded as
a deterministic version of Freivalds' algorithm
(Section 3).  Next, we extend the
aforementioned algorithm to include the case when $C'$ contains at
most~$k$ erroneous entries. The extension relies on distributing
erroneous entries of $C'$ into distinct submatrices by
deterministically shuffling the columns of $C'$ and correspondingly
the columns of $B.$ The resulting
deterministic algorithm runs in $\tilde{O}(k^2n^2)$ time, where the
notation $\tilde{O}$ suppresses polylogarithmic terms in $n$ and~$k$
(Section 4). Then we show how to reduce the time bound to 
$\tilde{O}(kn^2)$ by applying this shuffling approach first with
respect to the columns and then with respect to the rows of $C'$.
In the same section,
we discuss also a slightly randomized version of the aforementioned
algorithm running in 
$\tilde{O}(\sqrt k n^2)$ expected time
using $O(\log^2k + \log k\log\log n)$
random bits.  
Next, in Section 5, 
we present a faster randomized algorithm for
correcting $C'$  in $O((n\sqrt {\log n}  
+\sqrt{k}\min\{k,n\})n\sqrt {\log n})$
time almost surely
(i.e., with probability at least $1-n^{-\alpha}$ 
for any constant $\alpha \ge 1$),
where~$k$ is the non-necessarily known number of  erroneous entries of
$C'$. A slight modification of this algorithm
runs in $O((n\log k +\sqrt k \min\{k,n\})n)$ expected time provided
that the number of erroneous entries is known.
This is our fastest algorithm for correcting $C'$ when
$k$ is very small.
%Another modification of this algorithm relying on
%fast rectangular matrix multiplication 
%runs in $O((n 
%+\sqrt{k}\min\{k,n\})n)$
%time almost surely and
%no prior knowledge of~$k$ is needed.
Importantly, 
%but for the latter modification,
all our algorithms in Sections 3-5
are combinatorial (thus, they do not rely on the known
fast algorithms for matrix multiplication or fast polynomial multiplication)
 and easy to implement. In Section 6, we present a more advanced
algebraic approach based on the \emph{compressed matrix
multiplication} technique from~\cite{P13}.
In effect, we obtain a randomized algorithm
for correcting $C'$ in $O((n + k\log k\log \log k)n\log n)$ time
almost surely. Roughly, it asymptotically subsumes the randomized
algorithms of Section 5 for~$k$ larger than $n^{2/3}$ 
and asymptotically matches them up to a polylogarithmic factor  for
the remaining $k.$
We conclude with Final Remarks, where we discuss
how 
%the fast randomized algorithm from Section 6,
%the $O((n\log n +\sqrt k \min\{k,n\})n)$-expected-time algorithm
%from Section 5 
%and the slightly randomized
some of our randomized algorithms
can be also adjusted
to the situation when the number of erroneous entries
is unknown. 
For a summary of our results, see Table 1.

\begin{table*}[t]
\begin{center}
\begin{tabular}{||c|c|c||} \hline \hline
$\#$ errors $=e\le k$  & deterministic/randomized & ~time complexity~ 
\\ \hline \hline
$k=1$	& deterministic & $O(n^2)$ 
	\\ \hline
$k$ known & deterministic & $\tilde{O}(kn^2)$
\\ \hline
$k=e$ & $O(\log^2k + \log k\log\log n)$ & $\tilde{O}(\sqrt k n^2)$
\\
known & random bits & expected
\\ 
\hline
$k=e$ & randomized & 
$O((n\sqrt {\log n}+\sqrt{k}\min\{k,n\})n\sqrt {\log n})$ 
\\ 
unknown & & almost surely
\\\hline
$k=e$ & randomized & $O((n\log k +\sqrt k \min\{k,n\})n)$ 
\\
known &  & expected
\\\hline
$k$ known & randomized &  $O((n + k\log k\log \log k)n\log n)$
\\
 &  & almost surely 
\\ \hline \hline
\end{tabular}
\label{table: 1}
\vskip 0.5cm
\caption{The characteristics and time performances
of the algorithms for correcting an $n\times n$ matrix product
with at most~$k$ erroneous entries presented in this paper. 
The issue of adapting some of our randomized algorithms 
%presented
%in the third, fifth and sixth row (not counting the title row) 
to unknown
$k$ is discussed in Final Remarks. }
\end{center}
\end{table*}

\section{Preliminaries}
Let $(U,+,\times)$ be a semi-ring.
 For two $n$-dimensional vectors 
$a=(a_0,...,a_{n-1})$ and $b=(b_0,...,b_{n-1})$ 
with coordinates in $U$ their dot product
$\sum_{i=0}^{n-1}a_i\times b_i $ 
over the semi-ring is denoted by  $a\odot b.$

For an $p\times q$ matrix $A=(a_{ij})$
with entries in $U,$ its
$i$-th row $(a_{i1},...,a_{in})$ is denoted by $A(i,*).$
Similarly, the $j$-th column $(a_{1j},...,a_{nj})$ 
of $A$ is denoted by $A(*,j).$ Given another
$q\times r$ matrix $B$ with entries in $U,$  the matrix
product $A\times B$ of $A$ with $B$ 
over the semi-ring is 
a matrix $C=(c_{ij})$, where $c_{ij}=A(i,*)\odot B(*,j)$
for $1\le i,j\le n.$ 

\section{Correcting a matrix product with a single error}

Given two matrices $A,\ B$
of size $p\times q$ and $q\times r,$
respectively, and
their possibly erroneous $p\times r$
matrix product $C'$ over a ring,
Freivalds' algorithm picks uniformly
at random a vector in $\{0,\ 1\}^r$
and checks if $A(Bx^T)=C'x^T$, where
$x^T$ stands for a transpose of $x$,
i.e., the column vector corresponding to $x$ \cite{F77}. 
For $i=1,...,p,$ if the $i$-th row
of $C'$ contains an erroneous entry,
the $i$-th coordinates of the vectors
 $A(Bx^T)$ and $C'x^T$ will differ
 with probability at least $1/2.$

In the special case, when $C'$ 
contains a single error, we can
simply deterministically set $x$
to the vector $(1,...,1)\in \{0,\ 1\}^r$ in the 
aforementioned Freivalds' test. The
vectors $A(Bx^T)$, $C'x^T$ will differ
in exactly one coordinate whose
number equals the number of the row
of $C'$ containing the single erroneous entry.
(Note that the assumption that there 
is only one error is crucial here 
since otherwise two or more errors
in a row of $C'$ potentially could
cancel out their effect so that the
dot product of the row with $x,$ which in this case
is just the sum of entries in the row,
would be correct.)
Then, we can simply compute the $i$-th
row of the matrix product of $A$ and $B$
in order to correct $C'$.

The time complexity is thus linear with respect
to the total number of entries in all three
matrices, i.e., $O(pq+qr+pr)$. More precisely, 
it takes time $O(p \cdot r)$ to compute  $C'x^T$, 
$O(q \cdot r)$ to compute  $Bx^T$, and finally
 $O(p \cdot q)$ to compute  the product of $A$ with $Bx^T$.

\begin{theorem}\label{lem: single}
Let $A,\ B, \ C'$ be three matrices of size $p\times q,$
$q\times r$ and $p\times r$, respectively, over a ring.
Suppose that $C'$ is different from the matrix
product~$c$ of $A$ and $B$ exactly in a single entry.
We can identify this entry and correct it
in time linear with respect to the total number of
entries, i.e., in $O(pq+qr+pr)$ time.
\end{theorem}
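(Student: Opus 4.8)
The plan is to exploit the deterministic form of Freivalds' test described above: run it once with the all-ones vector to locate the erroneous row, then recompute that single row to find and fix the erroneous entry. Write $c = A \times B$ for the true product and let $(i_0, j_0)$ be the unique position at which $C'$ and $c$ disagree.

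First I would set $x = (1,\ldots,1) \in \{0,1\}^r$ and form the two column vectors $A(Bx^T)$ and $C'x^T$. The first is obtained by computing $Bx^T$ (the vector of row sums of $B$) in $O(qr)$ time and then multiplying $A$ by it in $O(pq)$ time; the second is just the vector of row sums of $C'$, computable in $O(pr)$ time. By associativity of matrix multiplication over the ring, $A(Bx^T) = (A \times B)x^T = cx^T$, so the $i$-th coordinate of $A(Bx^T)$ equals the sum of the entries of the $i$-th row of $c$, while the $i$-th coordinate of $C'x^T$ is the sum of the entries of the $i$-th row of $C'$.

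The key observation is that, because $C'$ has exactly one erroneous entry, no cancellation among errors is possible within any row. Hence the two row-sum vectors agree in every coordinate except $i_0$: for $i \neq i_0$ the $i$-th rows of $c$ and $C'$ coincide, whereas in row $i_0$ they differ precisely by the nonzero quantity $c_{i_0 j_0} - C'_{i_0 j_0}$, which therefore survives in the sum. Comparing the two vectors thus identifies the erroneous row $i_0$ unambiguously. This is exactly the point flagged in the discussion above: with two or more errors in a single row the discrepancies could cancel, so the single-error hypothesis is what makes the deterministic all-ones vector sufficient.

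Finally, having pinned down $i_0$, I would recompute the whole $i_0$-th row of the product, namely $A(i_0,*) \times B$, in $O(qr)$ time, and compare it entrywise with $C'(i_0,*)$; the unique column at which they disagree is $j_0$, and overwriting $C'_{i_0 j_0}$ with the recomputed value corrects the product. Adding the costs of the four steps gives $O(qr + pq + pr + qr) = O(pq + qr + pr)$, as claimed. The argument is entirely elementary, and the only step that genuinely needs checking is the no-cancellation claim, which is precisely where the single-error assumption is used in an essential way.
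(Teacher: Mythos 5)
Your proposal is correct and follows essentially the same route as the paper's own argument: the deterministic Freivalds test with the all-ones vector $x=(1,\dots,1)$ to locate the unique erroneous row (the single-error hypothesis ruling out cancellation of discrepancies within a row), followed by recomputing that row of $A\times B$ in $O(qr)$ time, with the same cost accounting of $O(pr)$ for $C'x^T$, $O(qr)$ for $Bx^T$, and $O(pq)$ for $A(Bx^T)$. Nothing is missing; your explicit justification that $A(Bx^T)=(A\times B)x^T$ by associativity and the entrywise comparison to pin down the column $j_0$ merely spell out steps the paper leaves implicit.
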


\junk{
An alternative method for correcting a matrix product
with a single error relies on the following lemma.

\begin{lemma}\label{lem: sum}
Let $A,\ B$ be two matrices of size $p\times q$
and $q\times r$, respectively, over a semi-ring
$(U,+, \times ).$ The sum of entries of the
product of $A$ and $B$ over the semi-ring
can be computed using $q$ multiplications
and $q(p+r-2)$ additions of the semi-ring.
\end{lemma}
\begin{proof}
The outer product of two vectors $s=(s_1,..,s_q)$
and $t=(t_1,...,t_q)$, denoted by $s\otimes t,$
is a $p\times r$ matrix $(u_{ij})$  such that $u_{ij}=s_i\times t_j,$
for $1\le i \le p$ and $1\le j \le r.$
It is easy to observe that the sum of entries in $s\otimes t$
is equal to $(\sum_{i=1}^p s_i)\times (\sum_{j=1}^rt_j)$.
This combined with the well known fact that the matrix product
of $A$ and $B$ equals $\sum_{k=1}^q A(*,k)\otimes B(k,*)$
yields the lemma.
\qed \end{proof}

By combining Lemma \ref{lem: sum} with binary search,
we obtain an alternative linear-time method for correcting a
matrix product with a single error.

\begin{theorem}\label{lem: single}
Let $A,\ B, \ C'$ be three matrices of size $p\times q,$
$q\times r$ and $p\times r$, respectively, over a ring.
Suppose that $C'$ is different from from the matrix
product~$c$ of $A$ and $B$ exactly in a single entry.
We can identify this entry and correct it
in $O(q(p+q))$ time. 
\end{theorem}
\begin{proof}
If $p=r=1$ then the lemma follows trivially.
Otherwise, we apply a binary search.
We split $C'$ along a larger side
into two submatrices $C'_1$ and $C'_2$ of almost
equal size. For example, if $p\ge r$ then
$C'_1$ consists of the first $\lceil p/2 \rceil $
rows of $C'$ while $C'_2$ is composed of
the remaining rows of $C'.$ Next, we compute
the sums of the entries in the corresponding
submatrices $C_1$ and $C_2$ of the matrix
product~$c$ of $A$ and $B.$ By Lemma \ref{lem: sum} it
takes $O(q(p+r))$ time in total. Now, it is sufficient
to compare the sum of entries in $C'_i$ 
with that in $C_i$
for $i=1,2,$ in order to decide if the faulty
entry is in $C'_1$ or $C'_2.$ Computing
the sums of entries in $C'_1$ and $C'_2$ clearly
can be also done in $O(q(p+r))$ time in total. 
Depending on the outcome of the comparisons, we
recur on either $C'_1$  or $C'_2.$
By Lemma \ref{lem: sum} and the inductive hypothesis, finding
a single error in either $C'_1$  or $C'_2.$
takes $cq(\frac 34(p+r)+1))$ time, for a constant $c.$
By setting~$c$ sufficiently large, we obtain the lemma.
\qed
\end{proof} }

\section{Correcting a matrix product with at most~$k$ errors}

In this section, we shall repeatedly
use a generalization of the deterministic version of
Freivalds' test  applied to detecting single erroneous
entries in the previous section.

Let $A,\ B$ be two $n\times n$ matrices,
and let $C'$ be their possibly faulty product matrix
with at most~$k$  erroneous entries,
over some ring.
Let $C^*$ and $B^*$ denote matrices
resulting from the same permutation of columns
in the matrices $C'$ and $B$.

Similarly as in the previous section, the generalized
deterministic version of 
\newline
Freivalds' test
verifies rows of $C^*,$
but only for a selected set of consecutive columns
of the matrix. Such a set of columns  will be called a {\em strip}.

We shall check
each strip of $C^*$
independently for erroneous entries
that occur in a single column of the strip.
To do this, when we
determine the  vector $v$ to be used
in the coordinate-wise comparison
of $A(B^*v^T)$ with $C^*v^T$, we set the
$i$-th coordinate of $v$ to $1$ if and only
if the $i$-th column of the matrix $C^*$ 
belongs to the strip we
want to test. Otherwise, we set the coordinate  to $0.$
(See Fig.~\ref{fig: illustration}.)

\begin{figure}[hbtp]%
  \begin{center}%
    \includegraphics[scale=0.286,keepaspectratio]{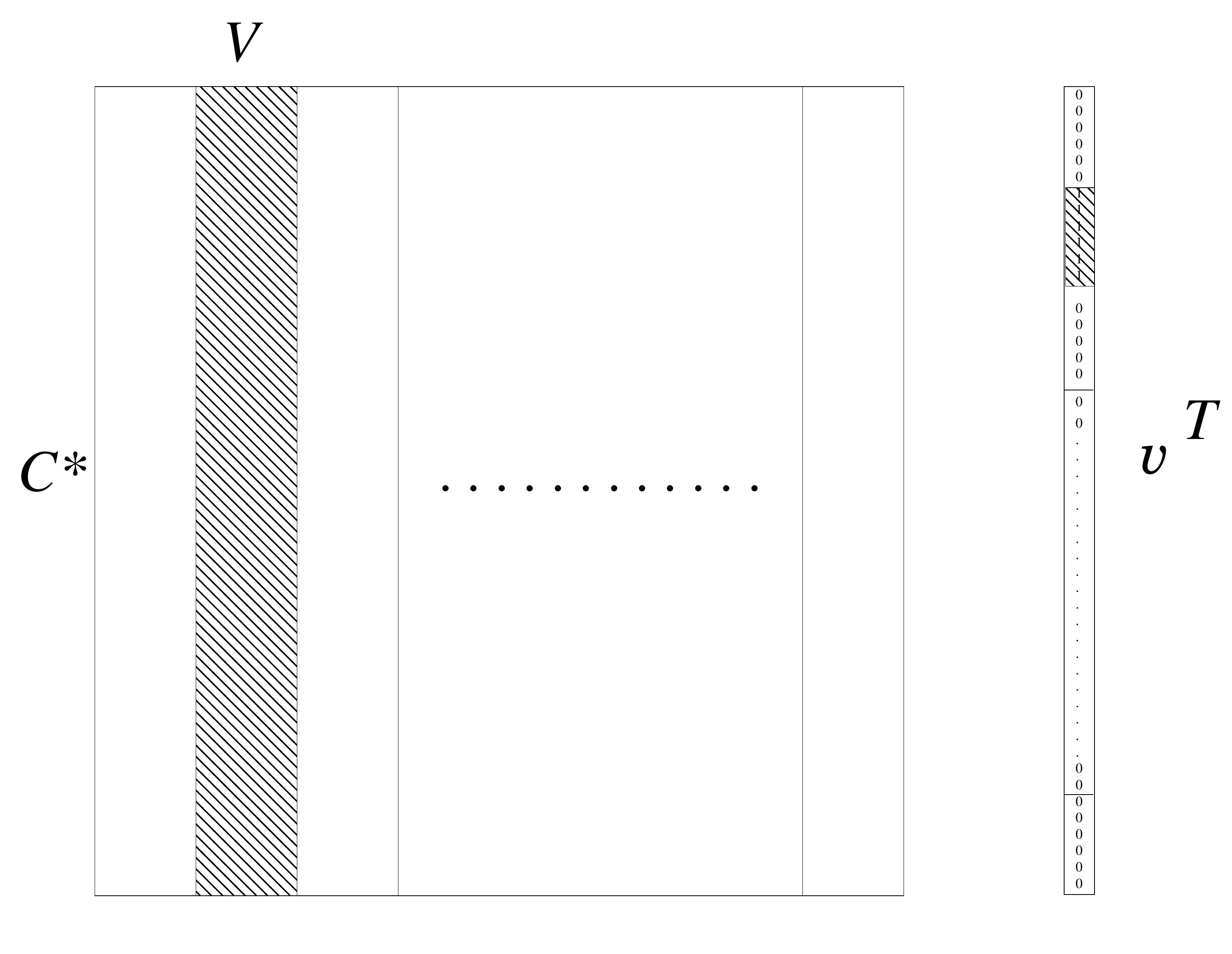}%
    \caption{Illustration of using the vector  $v^T$  in order to 
``extract'' the vertical strip  $V$ from the matrix  $C^*$.}%
    \label{fig: illustration}
  \end{center}%
\end{figure}

In this way,
for each row in a strip, we can detect whether or not the
strip row contains a single error.
The time complexity for testing a whole strip in this way
 is $O(n^2)$, independently from the number of columns of the strip.
If necessary, we can also correct a single row of a strip
by recomputing all its entries in time proportional 
to $n$ times the number of columns in the strip.

Our algorithm in this section relies also on the following 
number theoretical lemma.

\begin{lemma}\label{lem: prime}
Let $P=\{i_1,...,i_l\}$ be 
a set of $l$ different
indices in $\{1,...,n\}.$
There exists a constant~$c$ and for 
each $i_m\in P,$ a prime $p_m$
among the first $cl\log n / \log \log n$ primes
such that for $i_q\in P\setminus \{i_m\}$,
$i_m \bmod p_m \neq i_q \bmod p_m .$
\end{lemma}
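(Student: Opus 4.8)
The plan is to reformulate the isolation requirement as a divisibility condition and then to count, for each fixed index, how many primes can possibly fail it. Fix $i_m \in P$. A prime $p$ fails to isolate $i_m$ from some other $i_q \in P$ precisely when $i_m \equiv i_q \pmod p$, i.e. when $p$ divides the nonzero integer $i_m - i_q$. Hence the set of \emph{bad} primes for $i_m$ --- those violating the conclusion for at least one $i_q$ --- is exactly the set of prime divisors of the product
\[
  \Pi_m \;=\; \prod_{i_q \in P\setminus\{i_m\}} |i_m - i_q| ,
\]
since a prime divides a product of integers iff it divides one of the factors. First I would record the crude size bound $\Pi_m < n^{\,l-1}$, which holds because each factor is a positive integer smaller than $n$; taking logarithms, $\log \Pi_m < l\log n$.

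The heart of the argument is to bound the number of \emph{distinct} prime factors of $\Pi_m$. Here I would invoke the classical estimate that the product of the first $t$ primes satisfies $\log(p_1 p_2\cdots p_t) = \Theta(t\log t)$ (a consequence of Chebyshev's bound $\theta(x)=\Theta(x)$ together with $p_t = \Theta(t\log t)$). Consequently, if an integer $N$ has $t$ distinct prime factors then $\log N = \Omega(t\log t)$, and solving $t\log t = O(\log N)$ for $t$ gives the standard bound $\omega(N) = O(\log N/\log\log N)$ on the number of distinct prime divisors. Applying this to $N=\Pi_m$ and using $\log \Pi_m < l\log n$ yields
\[
  \omega(\Pi_m) \;=\; O\!\left(\frac{l\log n}{\log(l\log n)}\right) \;=\; O\!\left(\frac{l\log n}{\log\log n}\right),
\]
where the last step uses $\log(l\log n)\ge \log\log n$, so no case distinction on the relative size of $l$ and $\log n$ is needed.

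Finally I would close by a pigeonhole count. The first $c\,l\log n/\log\log n$ primes form a set of that many distinct primes, whereas the bad primes for $i_m$ number at most $\omega(\Pi_m) \le C\,l\log n/\log\log n$ for some absolute constant $C$ coming from the previous step. Choosing the constant $c > C$ guarantees that strictly more than $\omega(\Pi_m)$ of the first $c\,l\log n/\log\log n$ primes are available, so at least one of them, call it $p_m$, does not divide $\Pi_m$ and therefore isolates $i_m$ from every other $i_q\in P$. Repeating this for each $i_m\in P$ produces the required primes.

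I expect the main obstacle to be securing the $\log\log n$ factor rather than the skeleton of the argument. The naive route --- bounding each single difference $|i_m-i_q|<n$ by its at most $\log_2 n$ prime factors and summing over the $l-1$ other indices --- only yields $O(l\log n)$ bad primes and hence would force roughly $c\,l\log n$ primes. The improvement genuinely requires aggregating all the differences into the single product $\Pi_m$ and applying the sharp distinct-prime-factor bound $\omega(N)=O(\log N/\log\log N)$; the one point to double-check is that the denominator is really $\Omega(\log\log n)$ uniformly, which is exactly the inequality $\log(l\log n)\ge\log\log n$ noted above.
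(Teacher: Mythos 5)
Your proof is correct, and at its core it uses the same mechanism as the paper's: for each $i_m$, count the \emph{bad} primes (those dividing some difference $i_m-i_q$) via the distinct-prime-divisor estimate $\omega(N)=O(\log N/\log\log N)$, itself a consequence of the primorial/prime-density bounds, and finish by pigeonhole among the first $c\,l\log n/\log\log n$ primes. The only real difference is where the estimate is applied: the paper applies it \emph{per pair} --- each difference $|i_m-i_q|$ is a positive integer below $n$, so it has $O(\log n)$ bits and at most $O(\log n/\log\log n)$ distinct prime divisors --- and then sums over the $l-1$ other indices, whereas you aggregate all the differences into the single product $\Pi_m<n^{l-1}$ and apply the estimate once; both yield $O(l\log n/\log\log n)$ bad primes. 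This means your closing claim is mistaken: aggregation into $\Pi_m$ is \emph{not} needed to secure the $\log\log n$ factor, since the per-difference route loses it only if one uses the trivial bound of $\log_2 n$ prime factors per difference instead of the sharp $\omega(|i_m-i_q|)=O(\log n/\log\log n)$; the paper's proof is precisely the per-pair version with the sharp bound. Your aggregated variant is marginally stronger before the relaxation step (it gives denominator $\log(l\log n)$ rather than $\log\log n$), but after applying $\log(l\log n)\ge\log\log n$ the two arguments are essentially interchangeable.
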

\begin{proof}
It follows from the Chinese remainder theorem,
the density of primes
and the fact that each index in $P$ has $O(\log n)$
bits that there is a constant $b$ such that
for each pair $i_m,\ i_q$ of distinct indices
in $P$ there are at most 
$b\log n/\log \log n$ primes $p$
such that $i_m \bmod p = i_q \bmod p.$
Consequently, 
for each $i_m\in P$
there are at most 
$b(l-1)\log n / \log \log n$ primes $p$
for which there exists $i_q\in P\setminus \{i_m\}$ 
such that $i_q \bmod p
= i_m \bmod p .$
Thus, it is sufficient to set the constant~$c$
to $b$ in order to obtain the lemma.
\qed \end{proof}

\junk{Let $A,\ B$ be two $n\times n$ matrices,
and let $C'$ their possibly faulty product matrix
with at most~$k$  erroneous entries,
over some ring.}

Given the generalized deterministic version of
Freivalds' test and
Lemma \ref{lem: prime}, the idea of our algorithm
for correcting $C'$ is simple, see Fig.~\ref{fig: algo}.

For each prime $p$ among
the first $ck\log n/\log \log n$ primes, for $j=1,...,n,$
the $j$-th column is moved
into a (vertical) strip corresponding
to $j\bmod p.$ Correspondingly, the columns
of the matrix $B$ are permuted. 

\begin{figure}[htb]
\begin{footnotesize}
\noindent
{\bf Algorithm 1}
\par
\vskip 2pt
\noindent
{\em Input:} three $n\times n$ matrices $A,\ B,\ C'$
such that $C'$ differs from the matrix product
of $A$ and $B$ in at most~$k$ entries.
\par
\noindent
{\em Output:} the matrix product of $A$ and $B.$
\par
\vskip 2pt
\noindent
$L\leftarrow$ the set of the first  $ck\log n/\log \log n$ primes;
\par
\noindent
$C^*\leftarrow C';\ B^*\leftarrow B;$
\par
\noindent
{\bf for} each prime $p\in L$  {\bf do}
%vspace*{-0.3cm}
\begin{enumerate}
%\item {\bf for} $i=1,...,n$ {\bf do}
%\begin{enumerate}
%\item Shuffle the $i$-th row of $C^*$ into the
%$i\mod p +1$ horizontal strip of rows in $C^*;$
%\item Correspondingly shuffle the $i$-th row of $A^*$ into the
%$i\mod p +1$ vertical strip of rows in $A^*;$
%\end{enumerate}
\item {\bf for} $j=1,...,n$ {\bf do}
\begin{enumerate}
\item Move the $j$-th column of $C^*$ into the
$j\bmod p +1$ strip of columns in $C^*;$
\item Correspondingly move the $j$-th 
column of $B^*$ into the
$j\bmod p +1$ strip of columns in $B^*;$
\end{enumerate}
\item {\bf for} each strip $V$ of $C^*$ {\bf do}
\begin{enumerate}
\item Set $v$ to the vector in $\{ 0,\ 1\}^n$
whose $j$-th coordinate is  $1$ if and only if
the $j$-th column of $C^*$ belongs to $V$;
\item Compute the vectors $A(B^*v^T)$ and $C^*v^T;$
\item {\bf for} each coordinate $i$ in which
$A(B^*v^T)$ and $C^*v^T$ are different {\bf do}
\begin{enumerate}
\item Compute the entries in the $i$-th row
of the strip of $A\times B^*$ corresponding
to $V$ and correct the $i$-th row of $V$ in~$c$
appropriately.
\end{enumerate}
\end{enumerate}
\end{enumerate}
Output $C^*.$
\end{footnotesize}
\caption{A deterministic algorithm for correcting
at most~$k$ errors}\label{fig: algo}
\end{figure}

%Next, analogously, the columns of the (row shuffled) matrix $C'$
%and the columns of the matrix $B$ are shuffled into
%vertical strips corresponding to the modulo $p$ values
%of the original column numbers. 
Let $B^*$ and $C^*$ denote the resulting shuffled
matrices. 

Next, for each strip $V$
of $C^*$, we set $v$ to the vector
in $\{0,\ 1\}^n$ whose $j$-th coordinate is $1$
if and only if the $j$-th column belongs to $V.$
We compute and compare coordinate-wise the vectors
$A(B^*v^T)$ and $C^*v^T.$ 
Note that for $i=1,...,n,$ if there is a single
erroneous entry in the $i$-th row of $V$ then
the vectors $A(B^*v^T)$, $C^*v^T$ are different
in this coordinate. Simply, the $i$-th coordinate
of $C^*v^T$ is just the sum of the entries 
in the $i$-th row of $V$ while that coordinate of
$A(B^*v^T)$ is the sum of the entries in the 
$i$-th row of the vertical strip of the product
of $A$ and $B^*$ corresponding to $V.$

It follows in particular that for each strip which contains
only one erroneous column, we shall find all
erroneous rows in the strip.
Furthermore, we can correct all the erroneous
entries in a detected erroneous row
of the vertical strip $V$ in  $O(n^2/p)$ time
by computing $O(n/p)$ dot products
of rows of $A$ and columns of $B^*.$
Thus, in particular the correction of a single error
in a row of $V$ takes $O(n^2/p)$ time.

It follows from Lemma \ref{lem: prime},
that for each erroneous column in $C',$ there is
such a prime $p$ that the column is a single erroneous
column in one of the aforementioned vertical strips
of the shuffled matrix $C^*.$ 
Hence, all the~$k$ errors can be localized and corrected.

\begin{lemma} \label{lem: det}
Let $A,\ B, \ C'$ be three $n\times n$ matrices over a ring.
Suppose that $C'$ is different from the matrix
product~$c$ of $A$ and $B$ in at most~$k$ entries.
Algorithm 1 identifies these erroneous entries and corrects them
in $\tilde{O}(k^2n^2)$ time. 
\end{lemma}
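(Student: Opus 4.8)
The plan is to establish the statement in two independent parts: correctness (Algorithm 1 eliminates every error) and running time ($\tilde{O}(k^2n^2)$), both resting on the generalized deterministic Freivalds test and on Lemma~\ref{lem: prime}.

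For correctness, let $P\subseteq\{1,\dots,n\}$ be the set of columns of $C'$ that contain at least one erroneous entry. Since $C'$ differs from $A\times B$ in at most $k$ positions, $|P|=l\le k$, so Lemma~\ref{lem: prime} applies and furnishes, for every erroneous column $i_m\in P$, a prime $p_m$ among the first $cl\log n/\log\log n\le ck\log n/\log\log n$ primes, hence among the primes in $L$, such that no other column of $P$ has residue $i_m\bmod p_m$. The crucial consequence is that in the iteration for $p_m$, after the columns are shuffled by residue modulo $p_m$, the strip receiving column $i_m$ contains no further erroneous column; therefore every row of that strip holds at most one erroneous entry, namely the one in column $i_m$ when it is present. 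By the single-error property of the test (a lone error in a strip row changes the strip-restricted row sum and so cannot cancel), the coordinate-wise comparison of $A(B^*v^T)$ with $C^*v^T$ flags precisely the rows of the strip whose entry in column $i_m$ is wrong, whereupon the algorithm recomputes those strip rows and repairs them. I would then record the invariant that each iteration only replaces entries by their correct product values, so corrections never reintroduce errors, and conclude that after all primes of $L$ have been processed every erroneous entry, being isolated by at least one $p_m$, has been corrected; keeping track of the applied column permutation lets one output the product in the original order.

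For the running time I would charge three costs per prime $p\in L$ and then sum over $L$. Shuffling the columns of $C^*$ and $B^*$ costs $O(n^2)$; testing one strip costs $O(n^2)$ independently of its width, and there are $p$ strips, so the testing cost for $p$ is $O(pn^2)$. For correction, the number of flagged (strip, row) pairs for a fixed prime is at most $k$, since each flagged pair contains at least one error entry while distinct pairs own disjoint entry sets; recomputing one flagged row of a strip of width $O(n/p+1)$ costs $O(n^2/p+n)$, giving $O(kn^2/p+kn)$ per prime. With $m=|L|=\tilde{O}(k)$, the prime number theorem yields $\sum_{p\in L}p=\tilde{O}(k^2)$ and $\sum_{p\in L}1/p=O(\log\log k)$; hence the testing phase costs $\tilde{O}(k^2n^2)$, while shuffling and correction contribute only $\tilde{O}(kn^2+k^2n)$, which is dominated.

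The main obstacle is the time bound rather than the correctness argument: the per-prime testing cost $O(pn^2)$ grows with the prime, so the total is governed by the sum $\sum_{p\in L}p$ of the first $\tilde{O}(k)$ primes, which is quadratic in their number and thus produces the dominant $\tilde{O}(k^2n^2)$ term. The care needed here is to verify that the largest prime in $L$ is still $\tilde{O}(k)$ (so the strip counts do not blow up) and that the correction cost, which involves the harmonic-over-primes sum, stays strictly lower order; granting these estimates, summing the three contributions gives the claimed $\tilde{O}(k^2n^2)$ bound, dominated by the strip-testing phase.
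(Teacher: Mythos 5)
Your proposal is correct and follows essentially the same route as the paper's proof: correctness via Lemma~\ref{lem: prime} plus the no-cancellation property of the strip-restricted deterministic test, and a time bound driven by the $O(pn^2)$ per-prime testing cost summed over the first $\tilde{O}(k)$ primes, each of size $\tilde{O}(k)$. Your correction-cost accounting (at most $k$ flagged strip rows per prime, $O(n^2/p+n)$ each, giving $\tilde{O}(kn^2+k^2n)$ in total) is in fact slightly sharper and cleaner than the paper's, which states a looser bound for the corrections, but both are dominated by the $\tilde{O}(k^2n^2)$ testing phase.
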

\begin{proof}
The correctness of Algorithm 1
(see Fig.~\ref{fig: algo}) follows from 
the above discussion and 
Lemma \ref{lem: prime}.

Algorithm 1 iterates over $ck\log n/\log \log n$
smallest primes. Since an upper bound on the $i$-th prime
number is $O(i \log i)$ for any $i>1$, it follows that the largest prime considered
by the algorithm  has size $O(ck\log n\log k)$, and hence all these primes can be
listed in $O(c^2k^2\log^2 n\log k)$ time.

For a given prime $p,$ the algorithm
tests $p$ vertical strips $V$
for the containment of rows with single errors
by computing the vectors $A(B^*v^T)$ and $C^*v^T$.
It takes $O(n^2p)$ time. 

By the upper bounds on the number of considered primes
and their size, the total time taken by the tests
for all considered primes is 
$O(c^2k^2n^2\log^2 n \log k/\log \log n).$

The correction of an erroneous entry
in a detected erroneous row in a vertical strip
$V$ takes $O(n^2p)$ time. Hence, the correction
of the at most~$k$ erroneous entries 
in $C^*$ takes $O(ck^2n^2\log n \log k)$ time.

The tests and corrections dominate the running time
of the algorithm.
\qed \end{proof} 

In a practical implementation of the algorithm above,
one can of course implement the shuffling of the columns 
without actually copying data from one column to another.
For this purpose one could also define the strips in a different
way, i.e., they do not need to consist of consecutive columns.

\subsubsection{Reducing the time bound to $\tilde{O}(kn^2)$.}

In order to decrease the power of~$k$ in the upper bound
of the time complexity from $2$ to $1$, we make the following 
observation. Consider any column $i$ of $C'$. The number of erroneous 
entries in column $i$ that are in rows that have at least $\sqrt k$
erroneous  entries is at most $\sqrt k$.

We start by applying Algorithm 1 but only using the smallest
$c \sqrt{k} \log n / \log \log n$ primes. In this way
all rows that have at most $\sqrt k$ erroneous entries 
will be found in total
$\tilde{O}((\sqrt{k})^2 n^2)$ time, and will be fixed in
$O(n^2)$ time for each detected erroneous row.
So the time complexity up to this stage is dominated
by $\tilde{O}(kn^2)$.

Now, we let $C''$ be the partially corrected matrix and 
we apply the same procedure but
reversing the roles of columns and rows, i.e., we work with 
$ B^T A^T$ and $C''^T$. 
Since for any row of $C''^T$, all its erroneous entries that were in
columns of 
$C''^T$ with at most $\sqrt k$ errors were already corrected, 
now by the observation,
the number of erroneous entries in any row of $C''^T$ is at most $\sqrt
k$. Thus Algorithm 1 will now find all remaining erroneous rows in
time $\tilde{O}(kn^2)$ and we can correct them in additional
time $O(kn^2)$. Hence we obtain the following theorem:

\begin{theorem} \label{theo: det}
Let $A,\ B, \ C'$ be three $n\times n$ matrices over a ring.
Suppose that $C'$ is different from the matrix
product~$c$ of $A$ and $B$ in at most~$k$ entries.
We can identify these erroneous entries and correct them
in $\tilde{O}(kn^2)$ time. 
\end{theorem}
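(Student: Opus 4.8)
The plan is to use Algorithm~1 as a subroutine but to invoke it twice, each time with only the first $c\sqrt k\,\log n/\log\log n$ primes instead of the first $ck\log n/\log\log n$ primes, so that each invocation costs $\tilde O(kn^2)$ rather than $\tilde O(k^2n^2)$. Running Algorithm~1 with the primes guaranteed by Lemma~\ref{lem: prime} for sets of size $l$ isolates, and hence corrects, every error lying in a row that contains at most $l$ erroneous entries; its test cost is the cost in Lemma~\ref{lem: det} with the number of primes scaled down, namely $\tilde O(l^2n^2)$. Choosing $l=\sqrt k$ gives $\tilde O(kn^2)$ but only corrects errors in \emph{light} rows, those with at most $\sqrt k$ errors. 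The two passes are arranged so that the first cleans up all light rows and a transpose turns the surviving errors into light rows as well.

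First I would run Algorithm~1 on $C'$ (with $A$ and $B$) using the first $c\sqrt k\,\log n/\log\log n$ primes. By Lemma~\ref{lem: prime} with $l=\sqrt k$, for any row whose erroneous columns number at most $\sqrt k$ there is, for each such column, a prime placing that column alone in its strip; the generalized Freivalds test then flags the strip row and the entries are recomputed. Thus after this pass every row that originally held at most $\sqrt k$ errors is fully corrected. The test cost is $\tilde O((\sqrt k)^2n^2)=\tilde O(kn^2)$, and recomputing the detected rows contributes $O(kn^2)$, so the whole pass runs in $\tilde O(kn^2)$.

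The crucial structural step is the counting observation. After the first pass the remaining errors all lie in \emph{heavy} rows, rows that originally contained more than $\sqrt k$ errors; since there are at most $k$ errors in total there are fewer than $\sqrt k$ heavy rows. Because a single column meets each heavy row in at most one entry, every column of the partially corrected matrix $C''$ now carries at most $\sqrt k$ surviving errors. Transposing exchanges rows and columns, so every row of $C''^{T}$ holds at most $\sqrt k$ errors.

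The second pass then runs Algorithm~1 on $C''^{T}$ with input matrices $B^{T}A^{T}$ (using $(AB)^{T}=B^{T}A^{T}$), again with the first $c\sqrt k\,\log n/\log\log n$ primes. Since every row of $C''^{T}$ is now light, Lemma~\ref{lem: prime} with $l=\sqrt k$ isolates all of its errors, so all remaining errors are located and corrected in $\tilde O(kn^2)$ time for the tests plus $O(kn^2)$ for the recomputations. Adding the two passes yields the stated $\tilde O(kn^2)$ bound. The step I expect to require the most care is the counting observation licensing the transpose: one must verify that correcting every light row in the first pass really does drop the per-column error count to at most $\sqrt k$, so that after transposition the second application of Lemma~\ref{lem: prime} with $l=\sqrt k$ genuinely captures all survivors; once that is in place the time bookkeeping is routine.
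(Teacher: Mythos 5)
Your proposal is correct and follows essentially the same route as the paper: one pass of Algorithm~1 with only the first $c\sqrt{k}\log n/\log\log n$ primes to correct every row with at most $\sqrt{k}$ errors, the counting observation that the at most $\sqrt{k}$ surviving heavy rows leave at most $\sqrt{k}$ errors per column, and a second identical pass on $B^TA^T$ and the transpose of the partially corrected matrix. Your per-row application of Lemma~\ref{lem: prime} (isolating each erroneous column from the other erroneous columns \emph{of the same row}) is exactly the right reading, and in fact makes explicit a step the paper leaves implicit.
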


%\junk{
\subsubsection{Few random bits help.}
We can decrease the power of~$k$ in the upper bound
of Theorem \ref{theo: det} from $1$ to $0.5$
by using $O(\log^2k +\log k \log\log n)$ random bits
as follows and assuming that the exact number~$k$ of erroneous
entries in $C'$ is known. (The removal of this assumption
will be discussed later.) The idea is that instead
of testing systematically a sequence of primes, we
start by producing four times as many primes and then choose 
randomly among them in order to produce the strips.

We call a faulty entry in $C'$ 1-detectable if it lies in
a row or column of $C'$ with at most  $2 \sqrt k$ erroneous
entries. From this definition it follows that most faulty
entries are 1-detectable. More specifically, we call 
an entry in $C'$ 1-row-detectable, respectively 1-column-detectable,
if it lies in a row, respectively column, with at most  $2 \sqrt k$
erroneous entries. 

We will aim at detecting first a constant
fraction of the 1-row-detectable (false) entries,
and then a constant fraction of the 1-column-detectable 
entries. For this purpose we start by producing, in a
preprocessing phase,  the smallest 
$4c \sqrt k\log n /\log\log n$ primes (i.e.,
four times as many primes as we
did in the deterministic algorithm of Theorem \ref{theo: det}).

To detect sufficiently many 1-row-detectable entries we run
one iteration of Algorithm 1, with the difference that
we use a prime chosen randomly among the produced
$4c \sqrt k\log n /\log\log n$ smallest primes.
In this way, for each 1-row-detectable entry there is
at least a probability 1/2 that it will be detected.

Then we repeat once more this procedure but reversing 
the role of columns and rows, i.e., by working
with $ B^T A^T$ and $C'^T$. In this way for each
1-column-detectable entry there is at least a
probability 1/2 that it will be detected.

In this way, now each 1-detectable entry has been
detected with probability at least 1/2. By correcting
all these detected entries, we thus reduce the total
number of remaining false entries by an expected
constant fraction.

Thus we can set~$k$ to the remaining number of
false entries and start over again with the resulting, partially
corrected matrix $C'$. We repeat in this way until all
erroneous entries are corrected. 

The expected time
bound for the tests and 
corrections
incurred by the first selected primes  
dominate the
overall expected time complexity.
Note that the
bound is solely
$O(c \sqrt k n^2 \log n \log k).$

The number of
random bits needed to select such a random
prime is only  
\newline
$O(\log k + \log\log n).$
For a small $k,$
this  is much less than the logarithmic
in $n$ number of random bits
used in the best
known $O(n^2)$-time verification
algorithms
for matrix multiplication
obtained by a partial derandomization
of Freivalds' algorithm \cite{CK97,KS93,NN93}.

The overall number
of random bits, if we proceed
in this way and use fresh random
bits for every new selection of a
prime number has to be multiplied
by the expected number of the $O(\log k)$ iterations
of the algorithm. Thus, it becomes
$O(\log^2 k + \log k \log \log n).$ 
%but with
%some additional effort one can probably
%reduce this number to something closer
%to O(log k + loglogn). In any case,
%we can achieve the time bound O(k n^2 logn),
%by using only a small very small number
%of random bits.

Hence, we obtain the following slightly randomized version
of Theorem \ref{theo: det}.

\begin{theorem} \label{theo: few}
Let $A,\ B, \ C'$ be three $n\times n$ matrices over a ring.
Suppose that $C'$ is different from the matrix
product~$c$ of $A$ and $B$ in exactly~$k$ entries.
There is a randomized algorithm
that identifies these erroneous entries
and corrects them 
in $\tilde{O}(\sqrt k n^2)$ expected time
using $O(\log^2k + \log k\log\log n)$
random bits.
\end{theorem}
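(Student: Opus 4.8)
The plan is to formalize the iterative scheme sketched before the statement: show that one randomized round removes an expected constant fraction of the errors, and then control both the expected number of rounds and the expected total running time. First I would make the counting claim precise with threshold $2\sqrt k$. Since there are exactly $k$ errors, fewer than $\sqrt k/2$ rows can each contain more than $2\sqrt k$ errors, and likewise fewer than $\sqrt k/2$ columns. An error fails to be $1$-detectable only when it lies simultaneously in such a heavy row and such a heavy column, so the non-$1$-detectable errors number at most $(\sqrt k/2)^2=k/4$. Hence at least $3k/4$ of the errors are $1$-detectable, and each such error is $1$-row-detectable or $1$-column-detectable.

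Next I would bound the per-entry detection probability of a single round. Fix a $1$-row-detectable error in column $j$, lying in a row with at most $2\sqrt k$ errors, hence at most $2\sqrt k-1$ competing error columns $j'$. By the argument behind Lemma \ref{lem: prime}, for each pair $(j,j')$ at most $b\log n/\log\log n$ primes satisfy $j\equiv j'\pmod p$, so at most $(2\sqrt k-1)\,b\log n/\log\log n$ pool primes fail to isolate $j$. As the pool holds $4c\sqrt k\log n/\log\log n$ primes with $c=b$, at least half of it isolates $j$, so a uniformly random pool prime detects and corrects this error with probability at least $1/2$. The row pass handles the $1$-row-detectable errors and a symmetric column pass on $B^TA^T$ and the transposed matrix handles the $1$-column-detectable ones. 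Here I would invoke the monotonicity observation that correcting entries in the row pass can only decrease column error counts, so an entry that was $1$-column-detectable stays so and, conditioned on surviving the row pass, is still caught with probability at least $1/2$ by the independent column-pass prime. Combining the passes, every $1$-detectable error is corrected with probability at least $1/2$, so by linearity of expectation one round makes at least $\tfrac12\cdot\tfrac{3k}{4}=\tfrac{3k}{8}$ expected corrections, leaving an expected residual of at most $\tfrac58 k$.

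Finally I would assemble the iteration. Writing $K_0=k$ and $K_t$ for the error count after round $t$, the one-round bound gives $\mathbb{E}[K_{t+1}\mid K_t]\le\tfrac58 K_t$, hence $\mathbb{E}[K_t]\le(5/8)^t k$. Because the residual count is known exactly after each round (initial count minus the number of entries actually fixed), one may reset the parameter to $K_t$ and rerun, stopping when $K_t=0$. Markov's inequality on the integer-valued $K_t$ bounds the expected number of rounds by $\sum_t\Pr[K_t\ge1]\le\sum_t\min\{1,(5/8)^t k\}=O(\log k)$. For time, round $t$ draws from a pool of size $\Theta(\sqrt{K_t}\log n/\log\log n)$, so its two strip tests cost $O(n^2 p)=\tilde O(\sqrt{K_t}\,n^2)$ and its corrections cost $O(K_t n^2/p)=\tilde O(\sqrt{K_t}\,n^2)$, a round total of $\tilde O(\sqrt{K_t}\,n^2)$; by concavity $\mathbb{E}[\sqrt{K_t}]\le\sqrt{\mathbb{E}[K_t]}\le(5/8)^{t/2}\sqrt k$, so the expected total time telescopes into a geometric series summing to $\tilde O(\sqrt k\,n^2)$, dominated by the first round's $O(c\sqrt k\,n^2\log n\log k)$. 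Each prime costs $O(\log k+\log\log n)$ random bits, two draws per round over $O(\log k)$ rounds, giving $O(\log^2 k+\log k\log\log n)$ bits overall. The main obstacle, and the part I would write most carefully, is exactly this last probabilistic bookkeeping: converting the single-round constant-fraction guarantee into correct expectations over a random number of rounds with random parameters, which rests on the monotonicity of detectability across the two passes, Markov for the round count, and Jensen's inequality for the geometric time sum.
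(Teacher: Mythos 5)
Your proposal follows essentially the same route as the paper's own argument: the same pool of the first $4c\sqrt{k}\log n/\log\log n$ primes, the same notions of 1-row-detectable and 1-column-detectable entries, one row pass plus one transposed column pass per round with a fresh uniformly random pool prime each, resetting the parameter to the exactly known residual error count, and the same accounting of $O(\log k + \log\log n)$ random bits per round over $O(\log k)$ expected rounds. You in fact supply several details the paper only asserts: the explicit $3k/4$ lower bound on 1-detectable entries via the heavy-rows times heavy-columns count, the monotonicity of column-detectability across the row pass, and the Markov/Jensen bookkeeping turning the per-round constant-fraction guarantee into the $O(\log k)$ expected round count and the geometric time sum.

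One step is stated too strongly, though the theorem survives: the per-round correction bound $O(K_t n^2/p)=\tilde O(\sqrt{K_t}\,n^2)$ is false pointwise, because the prime is drawn uniformly from the \emph{first} $\Theta(\sqrt{K_t}\log n/\log\log n)$ primes and may be as small as $p=2$, in which case up to $K_t$ detected strip rows of width about $n/2$ cost $\Theta(K_t n^2)$ to recompute --- for $K_t$ near $k$ this exceeds $\tilde O(\sqrt{k}\,n^2)$. The repair is to take the expectation over the drawn prime: since the sum of reciprocals of the first $N$ primes is $O(\log\log N)$, a uniform pool prime satisfies $\mathbb{E}[1/p]=O(\log\log N/N)$ with $N=\Theta(\sqrt{K_t}\log n/\log\log n)$, so the expected correction cost per round is $O(K_t n^2\log\log N/N)=\tilde O(\sqrt{K_t}\,n^2)$, after which your Jensen-based geometric summation goes through unchanged (your test bound $O(n^2p)$ is fine pointwise, since every pool prime is $\tilde O(\sqrt{K_t})$). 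The paper itself glosses over this point, asserting the expected bound $O(c\sqrt{k}\,n^2\log n\log k)$ for tests and corrections without the $\mathbb{E}[1/p]$ computation, so this is a minor repairable imprecision rather than a flaw in your approach.
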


If the number~$k$ or erroneous entries
is not known, then our slightly randomized method
can be adapted in order to estimate the number 
of erroneous columns.
%We postpone this discussion to the next chapter, where 
Since similar issues arise
in connection to another randomized approaches presented
in the next chapters, we postpone this discussion to
Final Remarks. 

%}

\section{A simple randomized approach}\label{sec:randomizedl}

In this section,
similarly as in the previous one, we shall 
apply the original and modified Freivalds' tests.
First, we apply repeatedly the original Freivalds'
test to the input $n\times n$ matrices $A,$ $B,$ and $C'$
and then to their transposes. These tests allow us 
to extract a submatrix $C_1$ 
which very likely contains all erroneous entries of $C'.$
Finally, we apply modified Freivalds' tests to
(vertical) strips of the submatrix $C_1$ of $C'.$

In contrast with the previous section, 
the tests are randomized. The modified test is just a restriction of
Freivalds' original randomized algorithm \cite{F77} to a strip
of $C_1$ that detects each erroneous row of a strip with probability
at least  $1/2$ even if a row contains more
than one erroneous entry.

More precisely, the vector $v$ used to test a strip
of $C_1$ by comparing $A_1(B_1v^T)$ with $C_1v^T$,
where $A_1$ and $B_1$ are appropriate submatrices of
$A$ and $B,$ is set as follows. 
Suppose that $C_1$ is an $q\times r$ matrix. 
For $j=1,...,r,$ the $j$-th coordinate
of $v$ is set to $1$ independently with probability $1/2$ if and only
if the $j$-th column of $C'$ belongs to the strip we
want to test, otherwise the coordinate is set to $0.$
In this way, for each row in the strip, the test
detects whether or not the
strip row contains an erroneous entry  with probability at least
$1/2$, even if the row contains more
than one erroneous entry.
The test for a whole strip takes
 $O(n^2)$ time, independently of the number of columns of the strip. 

Using the aforementioned tests, we shall prove the following theorem.

\begin{theorem}\label{lemma: sqrt-k}
Let $A$, $B$ and $C'$ be three $n \times n$ matrices
over a ring.
Suppose that $C'$ is different from the matrix
product~$c$ of $A$ and $B$ in~$k$ entries.
There is 
\newline
a randomized algorithm that transforms $C'$ into
the product $A \times B$  in
\newline  
$O((n\sqrt {\log n} +\sqrt{k}\min\{k,n\})n\sqrt {\log n})$
time almost surely
without assuming any prior knowledge of $k.$
%(i.e., with probability at least $1-n^{-\alpha}$
%for any constant $\alpha \ge 1$), 
%where~$k$
%is the number of positions where an entry in $A \times B$
%differs from the corresponding entry in $C'$.
\end{theorem}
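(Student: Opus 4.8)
The plan is to proceed in three phases: (i) use the original Freivalds test to localize all erroneous entries into a small submatrix, (ii) bound the size of that submatrix, and (iii) apply the modified (probabilistic) strip tests of this section to correct the entries inside it, all the while avoiding any dependence on prior knowledge of $k$. First I would run Freivalds' original test on $A,B,C'$ to identify the erroneous \emph{rows}: a single random $v\in\{0,1\}^n$ flags each erroneous row with probability at least $1/2$, so repeating the test $O(\log n)$ times (with fresh random vectors) flags every erroneous row with probability at least $1-n^{-\alpha}$, i.e.\ almost surely. Symmetrically, running $O(\log n)$ tests on the transposes $A^T,B^T,C'^T$ flags every erroneous \emph{column}. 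Let $R$ and $S$ be the sets of flagged rows and columns; almost surely every faulty entry of $C'$ lies in the submatrix indexed by $R\times S$, call it $C_1$. Since there are at most $k$ errors, $|R|\le k$ and $|S|\le k$, but also $|R|\le n$ and $|S|\le n$ trivially, so $\min\{|R|,|S|\}\le\min\{k,n\}$.

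Second, to correct $C_1$ I would apply the modified strip test described just before the theorem. Suppose without loss of generality $|R|\le|S|$, so $C_1$ is a $q\times r$ submatrix with $q=|R|\le\min\{k,n\}$. I partition the $r$ columns of $C_1$ into strips and, within each strip, use a random $\{0,1\}$ vector as $v$; each strip test costs $O(n^2)$ regardless of the number of columns, and detects every erroneous row of the strip with probability at least $1/2$ even when a row holds several errors. The key combinatorial point is that I want roughly $\sqrt k$ columns per strip so that, after the strip tests isolate which (strip,row) cells are faulty, the actual correction --- recomputing the offending entries by dot products of rows of $A$ with columns of $B$ --- costs $O(\sqrt k\cdot\min\{k,n\}\cdot n)$ total, matching the second term of the claimed bound. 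By repeating each strip test $O(\log n)$ times I drive the per-strip failure probability below $n^{-\alpha}$, so almost surely all faulty entries inside $C_1$ are located and corrected.

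Summing the costs: the localization phase performs $O(\log n)$ full Freivalds tests at $O(n^2)$ each, giving $O(n^2\log n)$; when this is balanced against the square-root factor in the statement it contributes the $O(n\sqrt{\log n}\cdot n\sqrt{\log n})$ term. The strip-testing-and-correction phase contributes $O(\sqrt k\min\{k,n\}\,n\sqrt{\log n})$ once the $\log n$ repetitions are folded in. Together these give the stated $O((n\sqrt{\log n}+\sqrt k\min\{k,n\})\,n\sqrt{\log n})$ bound, and since every phase succeeds with probability at least $1-n^{-\alpha}$ and there are only polynomially many tests, a union bound shows the whole algorithm succeeds almost surely.

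The main obstacle I anticipate is the requirement that the bound hold \emph{without prior knowledge of $k$}, together with making ``almost surely'' uniform across all phases. The subtlety is choosing the strip width and the number of strips adaptively: since $k$ is unknown, I cannot simply set the width to $\sqrt k$ in advance. The natural fix is a doubling/geometric search --- guess successively larger values of an estimate $\hat k$, run the procedure, and verify success with a final round of Freivalds tests on the corrected matrix; because the running time is geometric in $\hat k$, the total cost is dominated by the last (correct) guess, and the extra verification rounds only inflate the failure probability by a polynomial factor, which the $O(\log n)$ repetitions absorb. Getting this adaptive bookkeeping to yield exactly the claimed time bound, rather than an extra logarithmic factor, is the delicate part and is where I would concentrate the careful analysis.
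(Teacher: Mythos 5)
Your architecture is the same as the paper's Algorithm 2 --- a first stage of $O(\log n)$ Freivalds tests on $A,B,C'$ and on their transposes to confine all errors (almost surely) to a submatrix $C_1$ with at most $\min\{k,n\}$ rows and columns, a second stage of randomized strip tests on $C_1$ followed by recomputation of erroneous strip rows, and a geometric search over guesses $k'$ when $k$ is unknown. However, there is a genuine quantitative gap in your second stage: you choose strips of width roughly $\sqrt k$ (about $\sqrt k$ strips when $k\le n$) and then assert that ``folding in'' the $O(\log n)$ repetitions yields $O(\sqrt k\,\min\{k,n\}\,n\sqrt{\log n})$. It does not. Each single strip test costs $\Theta(\min\{k,n\}\,n)$ regardless of strip width, because the vector $B_1v^T$ is a full $n$-vector even when $v$ is supported on one strip, so multiplying it by $A_1$ is not confined to the strip; hence $\sqrt k$ strips times $O(\log n)$ repetitions cost $O(\sqrt k\,\min\{k,n\}\,n\log n)$, a factor $\sqrt{\log n}$ above the theorem. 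The paper's parameter choice is exactly the balancing act you are missing: it uses only $\lceil\sqrt{k/\log n}\rceil$ strips, of width $\min\{k,n\}/\lceil\sqrt{k/\log n}\rceil$, so that testing costs $\sqrt{k/\log n}\cdot\log n\cdot\min\{k,n\}\,n=\sqrt k\,\min\{k,n\}\,n\sqrt{\log n}$ while correcting the at most $k$ erroneous strip rows costs $k\cdot n\cdot\min\{k,n\}/\sqrt{k/\log n}$, the same quantity. (As a secondary issue, with your width $\sqrt k$ the correction term itself can misbehave when $k>n$: it is $k^{3/2}n$ capped by $n^3$, which exceeds $\sqrt k\,n^2$ for $n<k<n^2$; the paper's width scales with $\min\{k,n\}$ and avoids this.)

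Your adaptive scheme for unknown $k$ also has a gap that the paper's abort mechanism is designed to close. You propose to ``run the procedure, and verify success with a final round of Freivalds tests'' for each guess, relying on geometric growth of the per-guess cost. But if the procedure includes correction, the cost of an iteration is not controlled by the guess: when $k'\ll k$, the strips are wide (width about $\min\{k,n\}/\sqrt{k'/\log n}$), and correcting the up-to-$k$ erroneous strip rows that will be detected costs on the order of $k\,n\,\min\{k,n\}/\sqrt{k'/\log n}$ --- for small constant $k'$ this exceeds the budget $\sqrt k\,\min\{k,n\}\,n\sqrt{\log n}$ by a factor of order $\sqrt{k/\log n}$, before your verification step is ever reached. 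The paper instead tests \emph{without correcting}: for each guess $k'$ it merely counts erroneous strip rows and aborts as soon as more than $k'$ of them are found, restarting with $4k'$; corrections are performed only under the final guess, so the iteration costs form a genuine geometric progression dominated by the last one. Two further details worth adopting: the paper seeds the guess with the maximum of the number of erroneous rows and columns reported by stage 1 (and a small constant), and it halts early --- computing the flagged rows or columns of $AB$ directly --- whenever stage 1 reports at most $\log n$ erroneous rows or columns, which both handles small $k$ and guarantees $k\ge\log n$ in the second stage, a fact its time analysis uses.
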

\begin{proof}
Let us assume for the moment that~$k$ is known in advance (this assumption will
be removed later). Our algorithm (see Algorithm 2 in Fig.~\ref{fig: algo2}) will
successively correct the erroneous entries of $C'$ until $C'$ will
become equal to  $A \times B$ . 
%For easier description, let us also assume that $\sqrt{k}$ is an integer, and
%that $n$ is a multiple of $\sqrt{k}$.

Our algorithm consists of two main stages.
In the first stage, the standard Freivalds'
algorithm is applied iteratively
to $A,$ $B,$ $C'$ and then to the transposes
of these matrices in order to filter out
all the rows and all the columns of $C'$
containing erroneous entries almost
certainly. If the number of the aforementioned
rows or columns is less than $\log n$ (e.g., when $k<\log n$) then 
all the entries in the rows or columns of
the product $A \times B$ are computed and the algorithm halts.
The computation of the aforementioned
entries takes $O(\min\{k,n\}n^2)$ time in total.
Otherwise, a submatrix $C_1$ of $C'$ consisting
of all entries on the intersection of the
aforementioned rows and columns is formed. 
It has at most $\min\{k,n\}$ rows and at most $\min\{k,n\}$
columns. 
%If the aforementioned number of rows or columns
%is less than $\log n$ (e.g., when $k<\log n$) then
%all the entries in the corresponding rows or columns
%of the product of $A$ and $B$ are computed in
%$O(\min\{k,n\}n)$ time and the algorithm halts.

In the second stage,
we consider a partition of the columns of $C_1$ into
at most $\lceil \sqrt {\frac k {\log n}} \rceil$ strips of equal size,
i.e., consecutive groups of at most 
%\newline
$\min\{k,n\}/\lceil
\sqrt{\frac k {\log n}}\rceil$ columns of $C_1.$ 
We treat each such strip separately and independently. 
For each strip, we apply our modification of Freivalds' test
$O(\log n)$ times. In this way, we can 
identify almost surely which rows of the
tested strip contain at least one error.
(Recall that
for each iteration and for each strip row, the chance of detecting an error,
if it exists, is at least $1/2$.)
Finally, for each erroneous strip row,
we compute the correct values for each one 
of its $O(\min\{k,n\}/ \sqrt{\frac k {\log n}})$  entries. 

\begin{figure}[htb]
\begin{footnotesize} 
\noindent
{\bf Algorithm 2}
\par
\vskip 2pt
\noindent
{\em Input:} three $n\times n$ matrices $A,\ B,\ C'$
such that $C'$ differs from the matrix product
of $A$ and $B$ in at most~$k$ entries.
\par
\noindent
{\em Output:} the matrix product of $A$ and $B$, almost surely.
\par
\vskip 2pt
\par
\noindent
Run Freivalds' algorithm $c \cdot \log n$ times on $A,\ B,\ C';$
\par
\noindent
Set $R$ to the set of indices of at most~$k$ rows
of $C'$ detected to be erroneous; 
\par
\noindent
If $\# R \le \log n$
then compute the rows of the product of $A$ and $B$ 
whose indices are in $R$, output the product
of $A$ and $B$, and stop;
\par
\noindent
Run Freivalds' algorithm $c \cdot \log n$ times on
$A^T,\ B^T,\ (C')^T;$
\par
\noindent
Set $L$ to the set of indices of at most~$k$ columns
of $C'$ detected to be erroneous;
\par
\noindent
If $\# L \le \log n$
then compute the columns of the product of $A$ and $B$ 
whose indices are in $L$, output the product
of $A$ and $B$, and stop;
\par
\noindent
Set $C_1$ to the submatrix of $C'$ consisting of all entries
occurring in the intersection of rows with indices in $R$ and 
columns with indices in $L;$
\par
\noindent
If $C_1$ is empty then return $C'$ and stop;
\par
\noindent
Set $A_1$ to the submatrix of $A$ consisting of all rows
with indices in $R;$
\par
\noindent
Set $B_1$ to the submatrix of $B$ consisting of all columns
with indices in $L;$
\par
\noindent
{\bf for} $i=1,..., \lceil \sqrt{\frac k {\log n}} \rceil $ {\bf do}
\begin{enumerate}
\item Run the
strip restriction of
Freivalds' algorithm $c \cdot \log n$ times on 
$A_1,$ $B_1$ and the $i$-th (vertical) strip of $C_1;$
\item For each erroneous strip row found in the  $i$-th (vertical) strip of $C'$, compute
      each entry of this strip row of $C_1$ and update $C'$ accordingly;
\end{enumerate}

Output $C'.$
\end{footnotesize}
\caption{A randomized algorithm for correcting
at most~$k$ errors\label{fig: algo2}}
\end{figure}

In each
iteration of the test in Step 1 in the algorithm, each erroneous
row in $C'$ will be detected with
a probability at least $1/2$. 
Hence, for a sufficiently large constant~$c$ (e.g.,
c=3) all erroneous rows of $C'$ will be detected almost surely within 
$c \cdot \log n$ iterations in Step 1.
Analogously, all erroneous columns of $C'$ will be detected almost surely within 
$c \cdot \log n$ iterations in Step 3.
It follows that all the erroneous entries of
$C'$ will belong to the submatrix $C_1$ consisting
of all entries on the intersection of the aforementioned
rows and columns of $C',$ almost surely. Recall that $C_1$ has at most
$\min\{k,n\}$ rows and at most $\min\{k,n\}$ columns.

Next, similarly, in Step 7  in the algorithm, each erroneous
row in each of the 
$\lceil \sqrt{\frac k {\log n}} \rceil $ 
strips of $C_1$ will be detected
almost surely.
If we use the straightforward method in order to compute the correct
values of an erroneous strip row, then it will take $O(n)$ time per
entry. Since each strip row of $C_1$ contains
$O(\min\{k,n\}/\sqrt {\frac k {\log n}})$ entries, the time
taken by a strip row becomes $O(n\min\{k,n\}/\sqrt {\frac k {\log n}})$. Since there are at
most~$k$ erroneous strip rows, the total time for correcting all the
erroneous strip rows in all strips of $C_1$
is $O(\sqrt k \min\{k,n\}n\sqrt {\log n})$. 
\junk{To estimate
the total time taken by 
the logarithmic number of applications of Freivalds' tests to
$A,$ $B,$ $C'$ in Step 1, to the transposes of
these matrices in Step 3, and by
the logarithmic number of applications
of the restrictions of Freivalds'
tests to the $O(\sqrt {\frac k {\log n}})$ vertical strips of $C_1$ in
Step 7, recall that $A_1$ has at most $\min\{k,n\}$ rows
and $n$ columns while $B_1$ has $n$ columns and at most
$\min\{k,n\}$ rows. This yields an upper
time bound of $O(n^2 \cdot \log n +\sqrt{k} \cdot
\min\{k,n\}n\cdot \sqrt {\log n})$ on the total time taken by the
tests.}

The total time taken by 
the logarithmic number of applications of Freivalds' tests to
$A,$ $B,$ $C'$ in Step 1 and to the transposes of
these matrices in Step 3 is $O(n^2\log n).$
To estimate
the total time taken by 
the logarithmic number of applications
of the restrictions of Freivalds'
tests to the $O(\sqrt {\frac k {\log n}})$ vertical strips of $C_1$ in
and matrices $A_1$ and $B_1$ in Step 7, recall that $A_1$ has at most $\min\{k,n\}$ rows
and $n$ columns, $B_1$ has $n$ rows and at most
$\min\{k,n\}$ columns, while $C_1$ has at most $\min\{k,n\}$
rows and columns.
Hence, in particular multiplications
of $C_1$ by the restricted test vectors take
$O(\min\{k,n\}\min\{k,n\}/\sqrt {k/\log n} \times 
\sqrt {k/\log n}\log n)$ time
in total, which is $O(\sqrt k \min\{k,n\}n\sqrt {\log n})$
since $k\ge \log n$ in the second stage.
Similarly, multiplications of $B_1$ by the restricted test vectors
take
\newline
$O(n\min\{k,n\}/\sqrt {k/\log n} \times 
\sqrt {k/\log n}\log n)$ time
in total, which is again 
\newline
$O(\sqrt k \min\{k,n\}n\sqrt {\log n}).$
Note that the $n$-coordinate vectors resulting from
multiplications of $B_1$ with the restricted test vectors are not
any more restricted and potentially each of their coordinates may
be non-zero. Therefore, the multiplications of $A_1$ with
the aforementioned vectors take $O(\min\{k,n\}n \times 
\sqrt {k/\log n}\log n)$ time
in total, which is $O(\sqrt{k} \cdot
\min\{k,n\}n\cdot \sqrt {\log n})$.
All this yields an upper
time bound of $O(n^2 \cdot \log n +\sqrt{k} \cdot
\min\{k,n\}n\cdot \sqrt {\log n})$ on the total time taken by the
tests in both stages..

In the second stage of
Algorithm 2, if we use, instead of the correct number~$k$ of
erroneous entries, a guessed number $k'$ which is larger than~$k$, 
then the time complexity becomes $O(n^2 \cdot \log n +\sqrt{k'} \cdot
\min\{k',n\}n\cdot \sqrt {\log n})$. 
This would be asymptotically fine as long as $k'$ is within a
constant factor of~$k$. On the other hand, if we guess $k'$
which is much smaller than~$k$, then the length of each erroneous
strip row in $C_1$ may become too large. For this reason,  first we have 
to find an appropriate size $k'$ for the strips to be used by our algorithm. 
For this purpose, we perform the first
stage of Algorithm 2, i.e., the logarithmic number of
original Freivalds' tests on the input matrices and their transposes.
Next, we set $k'$ to the maximum $k_0$ of the number of
erroneous rows and the number of erroneous columns reported
by the aforementioned tests, and a small constant, e.g., $4.$
Then, we
multiply our guess by $4,$ until we reach a good balance. More 
precisely, for each such guessed $k'$, without correcting any errors, 
we 
consider the partition of the
submatrix $C_1$ into $O(\sqrt{\frac {k'}{\log n}})$ 
strips, and apply our modified test to
each strip. As soon as we discover more than $k'$ erroneous 
strip rows in $C_1,$
we break the procedure without correcting any errors, 
and we start over with a four times larger
guess $k'$.

The aforementioned method of guessing $k'$ may result in at most
$O(\log k)$ wrong guesses until we achieve a good guess. 
%For any wrong
%guess $k'$, if the number of erroneous strip rows is much
%larger than $k'$, then we need only to apply Freivalds' algorithm an
%expected constant number of times until we break the trial. Hence, the
%expected cost of such a trial is only $O(\sqrt{k'} \cdot n^2)$.
Since we multiply our guess every time with $4,$ we obtain a
geometric progression of the estimated costs of subsequent trials.
%yielding a total expected time of $O(\sqrt{k} \cdot n^2)$ for all
%trials until we reach a sufficiently good guess.  
In this way, the
upper bound on the asymptotic complexity 
of the whole algorithm 
but the time complexity of
the first logarithmic number of original Freivalds' test is dominated by that
of the iteration for the final $k'.$ In this iteration,
we test each strip $c \cdot\log n$ 
times in order to detect almost surely  all erroneous strip rows. 
\junk{
Note that when the number of erroneous entries
is at most the start value $k_0,$
our algorithm will
keep its first guess, i.e., $k'=k_0,$ and
the number
of detected erroneous rows is at most $k'.$
% and so the number of strips will
%be (and remain) at most $\sqrt K_0.$
%Hence, it will correct at most $k_0$ erroneous rows
%in total time $O(n^2).$ 
On the other hand,
by the definition of $k_0,$ $k \ge k_0/4$ 
holds almost surely.  
Next, let us consider
the case when $k>k_0.$
With respect to our current guess $k'$, the number
of detected erroneous rows lies then almost surely between
$k'/4$ and $k'$. 
%(Note here that we may assume w.l.o.g that $k\ge 1.$)
Since each such an erroneous row in $C_1$ contains
$O(\min\{k',n\}\sqrt {\log n}/\sqrt {k'})$ entries, it can be recomputed in 
$O(\min\{k',n\}n\sqrt {\log n}/\sqrt {k'})$ time. 
Consequently, the total time complexity
 of correcting all the at most $k'$ erroneous
rows  becomes 
\newline
$O(\sqrt {k'} \min\{k',n\}n\sqrt {\log n})$. 
Since $k' \leq 4k$ holds, the theorem follows.}
\junk{Since $k' \leq 4k$ holds, we conclude that
the corrections of entries of $C_1$
in the second stage take $O(\sqrt {k} \min\{k,n\}n\sqrt {\log n})$
time. This yields the theorem.}
\qed
\end{proof}

Algorithm 2  in the proof of  Theorem \ref{lemma: sqrt-k} 
can be modified in order to achieve an expected time bound 
of  $O((n\log k +(\sqrt k \min\{k,n\})n)$ for correcting all errors, if
$k$ is known in advance. 
%(We discuss the removal of  this assumption in Final Remarks.)

In the first stage, we perform only a single test for
the matrices $A,$ $B$ and a single test for
their transposes. Note that each erroneous
entry of~$c$ occurs with probability at least $\frac 12$
in a detected erroneous row of~$c$ as well as with probability
at least $\frac 12$ in a detected erroneous column of $C.$
Hence, an erroneous entry occurs with probability at least
$\frac 14$ in the resulting matrix $C_1.$ It follows
that the expected number of erroneous entries in $C_1$ 
is at least one fourth of those in $C.$

Next, we modify the second stage of Algorithm 2 as follows.
We set the number of vertical strips to $\lceil \sqrt k \rceil .$ 
Next, instead of applying the strip
restriction of Freivalds' algorithm $c \cdot \log n$ times for
each strip, we apply it only once for each strip
and correct all erroneous rows which we detect.
By counting how many errors we have corrected,
we compute how many errors remain. Then we
recurse in the same way on the partially corrected matrix~$c$ using
as a parameter this new number of errors which remain
to be corrected.

During each iteration of the algorithm, each 
remaining error in~$c$ will be detected
and corrected with probability at least $\frac 12 \times \frac 14$.  
Thus, the expected number of remaining
errors will decrease at least by the multiplicative factor
$\frac 78$ after each iteration. 
It follows that the expected number of iterations is
$O(\log k).$ Consequently,
the total cost of the tests in the first stage becomes
$O(n^2\log k).$ For the total time cost of tests and corrections
in the second stage, 
we obtain a geometric progression on the expected time complexity 
of each iteration, and so the total expected time complexity
is dominated by the time taken by the first iteration, which
is $O(\sqrt{k}\min\{k,n\}  \cdot n )$. Thus we obtain the following
theorem.

\begin{theorem}\label{theorem:sqrt-k-known}
Let $A,\ B, \ C'$ be three $n\times n$ matrices over a ring.
Suppose that $C'$ is different from the matrix
product~$c$ of $A$ and $B$ in exactly~$k$ entries.
There is a randomized algorithm
that identifies these erroneous entries
and corrects them in
$O((n\log k +\sqrt k \min\{k,n\})n)$ expected time.
\end{theorem}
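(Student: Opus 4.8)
The plan is to keep the two-stage skeleton of Algorithm~2 from the proof of Theorem~\ref{lemma: sqrt-k}, but to replace the $\Theta(\log n)$ repetitions that bought the ``almost surely'' guarantee by a \emph{single} pass per round, and then to iterate the whole procedure on the progressively corrected matrix. Concretely, in the first stage I would run Freivalds' test \emph{once} on $A,B,C'$ and \emph{once} on the transposes, forming the submatrix $C_1$ from the intersection of the reported erroneous rows $R$ and columns $L$. Since the two tests use independent random vectors, a fixed erroneous entry $(i,j)$ is reported in its row with probability at least $1/2$ and, independently, in its column with probability at least $1/2$, so it survives into $C_1$ with probability at least $1/4$. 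In the second stage I would partition the columns of $C_1$ into $\lceil\sqrt{m}\rceil$ equal strips, where $m$ is the \emph{current} number of uncorrected errors, apply the strip restriction of Freivalds' test once to each strip, and recompute and correct every detected erroneous strip row. Because the strip test catches any erroneous strip row with probability at least $1/2$, each remaining error is located and corrected in a given round with probability at least $\tfrac12\cdot\tfrac14=\tfrac18$.

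Next I would set up the recursion. After a round I know exactly how many errors I corrected (I recomputed those rows), so I know the exact remaining count $m_{t+1}$ and can feed it back as the strip parameter for the next round; I repeat until no error is detected. The key probabilistic claim is that $E[m_{t+1}\mid m_t]\le\tfrac78 m_t$, whence $E[m_t]\le(7/8)^t k$. A Markov bound then gives $\Pr[m_t\ge 1]\le(7/8)^t k$, so the expected number of rounds is $\sum_{t\ge0}\min\{1,(7/8)^t k\}=O(\log k)$: the first $O(\log k)$ terms contribute at most one each and the geometric tail contributes $O(1)$. Since each round spends $O(n^2)$ time on the two first-stage tests, the total first-stage cost is $O(n^2\log k)$ in expectation.

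For the second-stage cost I would argue that a round with $m_t$ remaining errors costs $O\!\left(n\sqrt{m_t}\min\{m_t,n\}\right)$, as in Theorem~\ref{lemma: sqrt-k} but without the $\sqrt{\log n}$ factor, since we now use $\lceil\sqrt{m_t}\rceil$ strips and a single pass. The subtlety --- and the step I expect to be the real obstacle --- is that this per-round cost is \emph{convex} in the error count (for $m\le n$ it behaves like $n\,m^{3/2}$), so one cannot simply replace $m_t$ by its expectation. I would instead exploit that errors only ever decrease, $m_t\le m_0\le k$, to linearize: $m_t^{3/2}\le\sqrt{k}\,m_t$. Then by linearity of expectation $E\!\left[\sum_t m_t^{3/2}\right]\le\sqrt{k}\sum_t E[m_t]\le\sqrt{k}\sum_t(7/8)^t k=O(k^{3/2})$, so the total expected second-stage cost telescopes to $O(n\sqrt{k}\min\{k,n\})$, dominated by the first round. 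Combining the two stages yields the claimed $O((n\log k+\sqrt{k}\min\{k,n\})n)$ expected bound. The two things to verify carefully are the independence/one-pass detection probabilities that feed the $7/8$ contraction, and this linearization that tames the convex per-round cost.
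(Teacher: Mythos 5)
Your proposal follows essentially the same route as the paper's own proof: a single Freivalds pass per round on the matrices and their transposes (so each surviving error lands in $C_1$ with probability at least $1/4$), one strip test per strip with $\lceil\sqrt{m_t}\rceil$ strips, per-round detection-and-correction probability at least $\frac12\cdot\frac14=\frac18$ and hence the $7/8$ contraction, $O(\log k)$ expected rounds giving the $O(n^2\log k)$ first-stage cost, and a geometrically decaying second-stage cost dominated by the first iteration. Your linearization $m_t^{3/2}\le\sqrt{k}\,m_t$ (together with Jensen in the concave regime $m_t>n$) makes rigorous a step the paper only asserts --- that the expected per-round costs sum geometrically despite the convexity of the cost in the error count --- so your write-up is, if anything, slightly more careful than the published argument.
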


\junk{
Algorithm 2  in the proof of  Theorem \ref{lemma: sqrt-k} 
can be modified in order to achieve an expected time bound 
of  $O((n\log n +(\sqrt k \min\{k,n\})n)$ for correcting all errors, if
$k$ is known in advance. 
(We discuss the removal of  this assumption in Final Remarks.)
We modify the second stage of Algorithm 2 as follows.
We set the number of vertical strips to $\lceil \sqrt k \rceil .$ 
Next, instead of applying the strip
restriction of Freivalds' algorithm $c \cdot \log n$ times for
each strip, we apply it only once for each strip
and correct all erroneous rows which we detect.
By counting how many errors we have corrected,
we compute how many errors remain. Then we
recurse in the same way on the partially corrected matrix $C_1$ using
as a parameter this new number of errors which remain
to be corrected.

During each iteration of the algorithm, each 
remaining error in $C_1$ will be detected
and corrected with probability at least $1/2$.  Thus, the expected number of remaining
errors will be halved after each iteration. Consequently,
we obtain a geometric progression on the expected time complexity 
of each iteration, and so the total expected time complexity
is dominated by the time taken by the first iteration, which
is $O(\sqrt{k}\min\{k,n\}  \cdot n )$. Thus we obtain the following theorem.

\begin{theorem}\label{theorem:sqrt-k-known}
Let $A,\ B, \ C'$ be three $n\times n$ matrices over a ring.
Suppose that $C'$ is different from the matrix
product~$c$ of $A$ and $B$ in exactly~$k$ entries.
There is a randomized algorithm
that identifies these erroneous entries
and corrects them in
$O((n\log n +\sqrt k \min\{k,n\})n)$ expected time.
\end{theorem}}
%=================================================================

\section{A fast algebraic approach}\label{sec:compressmm}

In this section we present a fast randomized
 algorithm that makes use of the \emph{compressed matrix multiplication} technique presented in~\cite{P13}.
We choose to give a self-contained and slightly simplified description because we do not need the full power of the framework of~\cite{P13}.

For integer parameters $s, t$ to be chosen later, the construction uses $t$ pairs of hash functions $g_\ell, h_\ell : \{1,\dots,n\}\rightarrow \{1,\dots,s\}$, with $\ell=1,\dots, t$, chosen independently from a strongly universal family of hash functions~\cite{CW79}.
We will make use of the following property:
\begin{lemma}\label{lem:coll}\cite{P13}

	For $(i_1,j_1), (i_2,j_2) \in \{1,\dots,n\}^2$ where $(i_1,j_1)\ne (i_2,j_2)$ we have
	$$\Pr\left[g_\ell(i_1)+h_\ell(j_1) = g_\ell(i_2)+h_\ell(j_2)\right] \leq 1/s \enspace .$$
\end{lemma}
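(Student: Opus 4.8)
The plan is to reduce the claim to the defining property of a strongly universal family: for a single such function $f:\{1,\dots,n\}\to\{1,\dots,s\}$ and any two distinct arguments $x\ne x'$, the pair $(f(x),f(x'))$ is uniformly distributed over $\{1,\dots,s\}^2$, so in particular $f(x)$ is uniform and independent of $f(x')$. I will also use that the family generating $g_\ell$ and the one generating $h_\ell$ are drawn independently, so the functions $g_\ell$ and $h_\ell$ are independent. Rewriting the collision event $g_\ell(i_1)+h_\ell(j_1)=g_\ell(i_2)+h_\ell(j_2)$ as $g_\ell(i_1)-g_\ell(i_2)=h_\ell(j_2)-h_\ell(j_1)$, I would split the analysis according to whether the two row indices coincide.

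First I would treat the case $i_1\ne i_2$. Here I condition on the values of $g_\ell(i_2)$, $h_\ell(j_1)$ and $h_\ell(j_2)$; the collision event then asks only whether $g_\ell(i_1)$ equals the single determined value $c=g_\ell(i_2)+h_\ell(j_2)-h_\ell(j_1)$. Because $i_1\ne i_2$ and $g_\ell$ is strongly universal and independent of $h_\ell$, the variable $g_\ell(i_1)$ remains uniform on $\{1,\dots,s\}$ after this conditioning, so the conditional probability of hitting $c$ is at most $1/s$ (exactly $1/s$ when $c\in\{1,\dots,s\}$ and $0$ otherwise). Averaging over the conditioning gives probability at most $1/s$. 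Then I would handle the remaining case $i_1=i_2$: since $(i_1,j_1)\ne(i_2,j_2)$, this forces $j_1\ne j_2$, and the event collapses to $h_\ell(j_1)=h_\ell(j_2)$, which by pairwise independence of $h_\ell$ at the distinct points $j_1,j_2$ has probability $\sum_{y}\Pr[h_\ell(j_1)=y,\,h_\ell(j_2)=y]=s\cdot(1/s^2)=1/s$. Combining the two cases yields the stated bound.

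The proof is essentially a bookkeeping exercise in pairwise independence, so there is no deep obstacle. The only points that require care are the case analysis itself — making sure the degenerate coincidence $i_1=i_2$ is routed to $h_\ell$ rather than $g_\ell$ — and the justification that the conditioned variable $g_\ell(i_1)$ is still uniform, which uses both the $2$-independence of $g_\ell$ and its independence from $h_\ell$. Stating the conclusion as an inequality rather than an equality is what accommodates target values $c$ that fall outside $\{1,\dots,s\}$; if instead the sums $g_\ell(\cdot)+h_\ell(\cdot)$ are reduced modulo $s$, the same argument gives the bound with equality.
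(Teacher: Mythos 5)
Your proof is correct, and in fact the paper gives no proof of this lemma at all---it is imported by citation from \cite{P13}---so your argument stands in for the standard one behind the cited result: the case $i_1\ne i_2$ handled by conditioning on $g_\ell(i_2),h_\ell(j_1),h_\ell(j_2)$ and using the $2$-independence of $g_\ell$ together with its independence from $h_\ell$, and the degenerate case $i_1=i_2$, $j_1\ne j_2$ reduced to a collision $h_\ell(j_1)=h_\ell(j_2)$ of probability exactly $1/s$. The only implicit ingredient, that strong universality makes $(f(x),f(x'))$ uniform on $\{1,\dots,s\}^2$ for distinct $x,x'$, is precisely the Carter--Wegman definition \cite{CW79} the paper invokes, and your remark that the inequality (rather than equality) absorbs target values $c$ falling outside $\{1,\dots,s\}$ is exactly the right bookkeeping for the unreduced sums used in equation (\ref{eq:polydef}).
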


Our algorithm first computes the following $t$ polynomials based on the matrices $A=(a_{ik})$, $B=(b_{kj})$, and $C'=(c'_{ij})$:
\begin{equation}\label{eq:polydef}
	p_\ell(x) = \sum_{k=1}^n \left(\sum_{i=1}^n a_{ik} x^{g_\ell(i)}\right) \left(\sum_{j=1}^n b_{kj} x^{h_\ell(j)}\right) - \sum_{i=1}^n \sum_{j=1}^n c'_{ij} x^{g_\ell(i)+h_\ell(j)},
\end{equation}
for $\ell=1,\dots,t$.
Multiplication of the polynomials corresponding to $A$ and $B$ is done
efficiently (over any ring) using the algorithm of Cantor and 
Kaltofen~\cite{CK91}, based on the original polynomial multiplication algorithm of Sch{\"o}nhage and Strassen~\cite{SS71}.

Let $p(x)_m$ denote the coefficient of $x^m$ in a polynomial $p(x)$. For each entry $i,j$ of $C'$ we assess the error term that must be added to $c'_{ij}$ as the \emph{majority element} of the sequence $p_\ell(x)_{g_\ell(i)+h_\ell(j)}$, $\ell=1,\dots,t$.
We will choose $s$ and $t$ such that with high probability the correction term (in most cases zero) appears more than $t/2$ times in the sequence.
If there is no such element for some entry $i,j$ the algorithm fails.

\subsection{Correctness}

Suppose $C=AB=(c_{ij})$ is the true matrix product.
Expanding the sum (\ref{eq:polydef}) and reordering the order of summation we get:
\begin{align*}
	p_\ell(x) & = \sum_{i=1}^n \sum_{j=1}^n \sum_{k=1}^n a_{ik} x^{g_\ell(i)}  b_{kj} x^{h_\ell(j)} - \sum_{i=1}^n \sum_{j=1}^n c'_{ij} x^{g_\ell(i)+h_\ell(j)}\\
	& = \sum_{i=1}^n \sum_{j=1}^n (c_{ij} - c'_{ij})\, x^{g_\ell(i)+h_\ell(j)} \enspace .
\end{align*}
This means that each coefficient of $p_\ell(x)$ is a sum of error terms:
\begin{align*}
	p_\ell(x)_m & = \sum_{\substack{i,j\\ g(i)+h(j)=m}} c_{ij} - c'_{ij} \enspace .
\end{align*}

Let $K\subseteq \{1,\dots,n\}^2$ be the set of positions of errors.
For $i^*,j^*\in \{1,\dots,n\}$:
\begin{equation}\label{eq:estimator}
p_\ell(x)_{g(i^*)+h(j^*)} = c_{i^* j^*} - c'_{i^* j^*} + \sum_{\substack{(i,j)\in K\backslash \{(i^*,j^*)\}\\ g(i)+h(j)=g(i^*)+h(j^*)}} c_{ij} - c'_{ij} \enspace .
\end{equation}
Lemma~\ref{lem:coll} states that $g(i)+h(j)=g(i^*)+h(j^*)$ holds with probability at most~$1/s$.
By a union bound the probability that the sum in (\ref{eq:estimator}) has at least one nonzero term is at most~$k/s$.
Choosing $s\geq 3k$ we get that $p_\ell(x)_{g(i^*)+h(j^*)} = c_{i^*j^*} - c'_{i^*j^*}$ with probability at least  $2/3$.
By Chernoff bounds this implies that after $t$ repetitions the probability that $p_\ell(x)_{g(i)+h(j)} = c_{ij} - c'_{ij}$ does not hold for at least $t/2$ values of $\ell$ is exponentially small in~$t$.
Choosing $t=O(\log n)$ we can achieve an arbitrarily small polynomial error probability in $n$ (even when summed over all entries $i$, $j$).

\subsection{Time analysis}

Strongly universal hash functions can be selected in constant time and space~\cite{CW79}, and evaluated in constant time.
This means that they will not dominate the running time.
Time $O(n^2+ns)$ is used to compute the polynomials 
$\sum_{i=1}^n a_{ik} x^{g_\ell(i)}$, $\sum_{j=1}^n b_{kj} x^{h_\ell(j)}$, and $\sum_{i=1}^n \sum_{j=1}^n c'_{ij} x^{g_\ell(i)+h_\ell(j)}$ in (\ref{eq:polydef}).
This can be seen by noticing that each entry of $A$, $B$, and $C'$ occur in one polynomial, and that there are $2n+1$ polynomials of degree~$s$.
Another component of the running time is the $tn$ multiplications of degree-$s$ polynomials, that each require $O(s \log s \log\log s)$ operations~\cite{CK91}.
Finally, time $O(tn^2)$ is needed to compute the correction term for each entry $i,j$ of $C'$ based on the sequence $p_\ell(x)_{g_\ell(i)+h_\ell(j)}$.
With the choices $s=O(k)$, $t=O(\log n)$ the combined number of
operations (algebraic and logical) is $O(n^2\log n + kn\log n\log
k\log \log k).$ 

\begin{theorem}\label{thm:compressedmm}
Let $A$, $B$ and $C'$ be three $n \times n$ matrices
over a ring.
Suppose that $C'$ is different from the matrix
product~$c$ of $A$ and $B$ in at most~$k$ entries.
There is a randomized algorithm that transforms $C'$ into
the product $A \times B$  in
\newline  
$O((n + k\log k\log \log k)n\log n)$ time, i.e., $\tilde{O}(n^2 + kn)$
time, almost surely.
\end{theorem}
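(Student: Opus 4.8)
The plan is to assemble the per-entry estimator analysis and the running-time accounting already set up above into a single guarantee, with the parameters fixed at $s = O(k)$ and $t = O(\log n)$. Recall that the algorithm outputs, for each entry $(i,j)$, the value $c'_{ij}$ corrected by the majority element of the sequence $\bigl(p_\ell(x)_{g_\ell(i)+h_\ell(j)}\bigr)_{\ell=1}^{t}$. I would argue that, with these parameter choices, this majority equals the true correction term $c_{ij}-c'_{ij}$ simultaneously for every entry with probability at least $1-n^{-\alpha}$, and that the whole computation fits in the claimed time.

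First I would establish per-entry correctness for a single round. Fix an entry $(i^*,j^*)$ and an index $\ell$. By the expansion of~(\ref{eq:polydef}), the relevant coefficient decomposes exactly as in~(\ref{eq:estimator}): it equals $c_{i^*j^*}-c'_{i^*j^*}$ plus a sum of error terms $c_{ij}-c'_{ij}$ ranging over the error positions $(i,j)\in K\setminus\{(i^*,j^*)\}$ that collide into the same bucket, i.e.\ with $g_\ell(i)+h_\ell(j)=g_\ell(i^*)+h_\ell(j^*)$. Lemma~\ref{lem:coll} bounds each collision probability by $1/s$, so a union bound over the at most $k$ error positions shows the spurious sum is empty — and hence the coefficient equals the true correction term — with probability at least $1-k/s$, which is at least $2/3$ once $s\geq 3k$.

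Next I would amplify this constant success probability. Since the $t$ hash pairs are drawn independently, the events ``round $\ell$ yields the correct correction term for $(i^*,j^*)$'' are independent across $\ell$, each with probability at least $2/3$, so a Chernoff bound shows the majority over $t$ rounds is wrong only with probability exponentially small in $t$. Choosing $t=c'\log n$ with $c'$ large enough pushes this per-entry failure probability below $n^{-(\alpha+2)}$, and a union bound over all $n^2$ entries then caps the total failure probability at $n^{-\alpha}$, the required ``almost surely'' guarantee; on every non-failing run the returned matrix agrees with $c=A\times B$ in every coordinate $A(i,*)\odot B(*,j)$.

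Finally I would read off the running time from the time analysis above under $s=O(k)$, $t=O(\log n)$: forming the $2n+1$ degree-$s$ polynomials of~(\ref{eq:polydef}) over all rounds costs $O(t(n^2+ns))$, the $tn$ products of degree-$s$ polynomials cost $O(tn\cdot s\log s\log\log s)$ by Cantor--Kaltofen~\cite{CK91}, and extracting the majority for each of the $n^2$ entries costs $O(tn^2)$, summing to $O(n^2\log n + kn\log n\log k\log\log k)=O((n+k\log k\log\log k)n\log n)=\tilde O(n^2+kn)$. The step I expect to be the main obstacle is not any single calculation but the simultaneous balancing of the two parameters: $s$ must be $\Theta(k)$ so the collision union bound keeps each round's success probability bounded away from $1/2$, while $t$ must be $\Theta(\log n)$ so the Chernoff amplification still survives the union bound over all $n^2$ entries — and one must check that the time, being only linear in both $s$ and $t$ up to the polynomial-multiplication overhead, remains within $\tilde O(n^2+kn)$ under exactly these constraints, so that correctness and efficiency are not in tension.
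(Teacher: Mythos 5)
Your proposal is correct and follows essentially the same route as the paper's own argument: the same per-round collision analysis via Lemma~\ref{lem:coll} with $s\geq 3k$, the same Chernoff amplification of the majority estimator with $t=O(\log n)$ followed by a union bound over all $n^2$ entries, and the same time accounting with Cantor--Kaltofen polynomial multiplication yielding $O(n^2\log n + kn\log n\log k\log\log k)$. Your explicit tracking of the factor $t$ in the polynomial-construction cost and the explicit $n^{-(\alpha+2)}$ per-entry target are slightly more careful than the paper's statement, but introduce no new ideas.
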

%\subsection{Remarks}

While the above assumes prior knowledge of~$k$, we observe in Final
Remarks that this assumption can 
be removed with only a slight increase in running time.
%error probability.
Observe that the algorithm of Theorem~\ref{thm:compressedmm} needs
$O(t\log n)$ bits of space, which is $O(\log^2 n)$.
%However, Lemma~\ref{lem:coll} can be extended to hash families that are not strongly universal and require fewer random bits to sample.
%For example, choosing a prime $p$ at random among primes in the range $[\beta k^3\log n;2\beta k^3\log n]$, for a suitable constant $\beta$, will ensure that no two pairs in~$k$ are congruent modulo $p$ with probability $1-1/k$.
%This means that we can map $(i,j) \mapsto (i \mod p, j\mod p)$ and work with strongly universal hashing in the smaller universe $\{0,\dots,p-1\}^2$.
\junk{
We remark that the general idea of using linear sketches to compute compact summaries of matrix products may be useful in general for correcting matrix products.
For example, Iwen and Spencer~\cite{IS09} show that for complex-valued matrix products there is an efficiently computable linear sketch that allows recovery of the matrix product if the number of nonzeros in each column is bounded by roughly $n^{0.3}$.
Using linearity one can subtract the linear sketch for $C'$ to get the linear sketch of $AB-C'$, which has~$k$ nonzero entries.
If the number of nonzeros in each column of $AB-C$ is bounded by $n^{0.3}$, they can all be computed in time $n^{2+o(1)}$.
However, it is not clear for which rings this method will work, so while this is an interesting direction for future research we do not pursue it further here.}

%===============================================================
\section{Final Remarks}

%There is an alternative randomized method for
%matrix product correction pointed out by an anonymous referee
%which in fact is superior when the number of errors
%exceeds $n^{2/3}$, e.g., over the ring of reals. 
%Pagh in \cite{P13} shows how to compute the so caled count
%sketch of the matrix product of two $n\times n$
%real matrices $A$ and $B$, $CS(AB),$  mapping into $O(k)$ dimensions
%in $O(n^2+kn)$ times. By linearity $SC(C')-CS(Ab)=CS(C')$, we can
%compute the count sketch of $C'$ from which the~$k$ errorneous entries
%can be reconstructed easily. The method requires at least an embedding
%in a ring aloowing for FFT.

The majority of our randomized algorithms,
in particular that from Section 6,
can be efficiently adapted to the case when the number~$k$ of errors
is unknown,
proceeding similarly as in the proof of Theorem~\ref{lemma: sqrt-k}.
First, observe that using a parameter value $k'$ that is larger than~$k$ by a constant factor will yield the same guarantee on correctness and asymptotic running time.
This means that we can try geometrically increasing values of $k'$,
for example $k'=4^l$ for $l=1,2,3,\dots$ until the algorithm 
returns a correct answer within the stated time bound (using a suitably large constant in place of the big-O notation).
Correctness is efficiently checked using Freivalds' technique.
This technique increases the time bound by at most a factor $\log n$ compared to the case where~$k$ is known.
Furthermore, if $k\geq n \log n$ the time will be dominated by the last iteration, and we get time bounds identical to the case of known~$k$.
\junk{
(((KEEP? The algorithm used for Theorem~\ref{theorem:sqrt-k-known} can be adapted
for the case when the number~$k$ of errors is unknown,
by making guesses $k'$ of the form $4^l$, similarly to the
proof of Theorem \ref{lemma: sqrt-k}, starting with
$k' =4$. For each new
guess $k'$, we divide the matrix $C_1$ into $\sqrt{k'}$
strips and apply the strip-restricted variant of
Freivalds' algorithm only once for each strip, counting
the number of detected erroneous strip rows, without
performing any corrections. If the number of detected
erroneous strip rows is greater than $k'$, we break
the procedure and start over with a four times larger
guess. Otherwise, we correct all errors in the detected
erroneous strips, and start over the algorithm with
the partially corrected matrix $C_1$. However, as a final
phase we may have to perform $O(\log n)$ additional
iterations to be sufficiently sure that no errors remain.)))

A question arises what happens if during an iteration of the
method outlined in the previous paragraph no errors are detected. 
If we terminate immediately after this iteration, then there may be
a substantial risk that a few errors remain. For this
reason, before terminating one may wish to perform
some additional iterations, in order to achieve sufficient
certainty that no errors remain. 
For example, it suffices to perform $O(\log n)$ additional
iterations as a final phase where no new errors are detected, 
in order to output the correct matrix product
almost surely. There is a possibility that before finishing this final
phase, we detect some still uncorrected errors.
However, for each new iteration  the probability
of detecting new errors drops exponentially,
both with respect to the number of iterations already
performed in the final phase, 
and with respect to the number of newly detected
errors during these final iterations.
}

%\junk{
A similar approach can also be
used for refining the slightly randomized method of Theorem~\ref{theo:
  few} when the number of errors~$k$ is not known in advance. However,
if there is no knowledge at all concerning the number of errors, it
may be difficult to handle the case when no errors are detected: does
this happen because there are no errors at all, or because there are
too many errors and we chose a random prime from a too small range,
thus failing to isolate 1-detectable false entries? For this
reason, if there is no known useful upper bound on the remaining
number of errors, and we do not detect any errors during a series of
iterations, we may have to resort to some of the known algorithms
which test whether there are any errors at all~\cite{CK97,KS93,NN93}.
All such known algorithms running in time $O(n^2)$ may need a
logarithmic number of random bits, so if~$k$ is very small then this
may be asymptotically larger than the low number of random bits stated
in Theorem~\ref{theo: few}.
%}

Note that the problem of correcting a matrix product is very
general. In the extreme case, when all entries of the matrix $C'$ may
be mistrusted, it includes the problem of computing the matrix product
$C$ from scratch. Also, when the matrix~$c$ is known to be sparse,
i.e., mostly filled with zeros, then we can set $C'$ to the all-zeros
matrix, and apply our matrix correction algorithms in order to obtain
output-sensitive algorithms for matrix multiplication (the number of
non-zero entries in~$c$ equals the number of erroneous entries in
$C'$).  They will be
%be combinatorial as our matrix correction
%algorithms, and hence 
slower than those known in the literature
based on fast rectangular matrix multiplication
\cite{AP09,LeGall12,HP98,L09}
(cf. \cite{WY14}). 

Finally, the general idea of using linear sketches to compute compact
summaries of matrix products may be useful in general for correcting
matrix products.  For example, Iwen and Spencer~\cite{IS11} show that
for complex-valued matrix products there is an efficiently computable
linear sketch that allows recovery of the matrix product if the number
of nonzeros in each column is bounded by roughly $n^{0.3}$.  Using
linearity one can subtract the linear sketch for $C'$ to get the
linear sketch of $AB-C'$, which has~$k$ nonzero entries.  If the
number of nonzeros in each column of $AB-C$ is bounded by $n^{0.3}$,
they can all be computed in time $n^{2+o(1)}$.  However, it is not
clear for which rings this method will work, so while this is an
interesting direction for future research we do not pursue it further
here.

%%%%%%%%%%%%%%%%%%%%%%%%%%%%%%%%%%%%%%%%%%%%%%%%%%%%%%%%%%%%%%%%%%%%%%%%%%%%%%

\begin{acknowledgements}
We thank anonymous referees for
helping us to improve preliminary versions of this paper.

Christos Levcopoulos and Andrzej Lingas
were  supported in part by Swedish Research Council grant
621-2011-6179.
Takeshi Tokuyama was supported by JSPS Grant Scientific Research (B) 15H02665,  JSPS Scientific Research for Innovative Area 24106007, and JST  ERATO Kawarabayashi Big-Graph Project.
\end{acknowledgements}

%%%%%%%%%%%%%%%%%%%%%%%%%%%%%%%%%%%%%%%%%%%%%%%%%%%%%%%%%%%%%%%%%%%%%%%%%%%%%%

%===============================================================

\newcommand{\CIAC}{Italian Conference on Algorithms and Complexity}
\newcommand{\COCOON}{Annual International Computing Combinatorics Conference (COCOON)}
\newcommand{\COMPGEOM}{Annual ACM Symposium on Computational Geometry (SoCG)}
\newcommand{\ESA}{Annual European Symposium on Algorithms (ESA)}
\newcommand{\FOCS}{IEEE Symposium on Foundations of Computer Science (FOCS)}
\renewcommand{\FOCS}{IEEE FOCS}
\newcommand{\FSTTCS}{Foundations of Software Technology and Theoretical Computer Science (FSTTCS)}
\newcommand{\ICALP}{Annual International Colloquium on Automata, Languages and Programming (ICALP)}
\renewcommand{\ICALP}{ICALP}
\newcommand{\IPCO}{International Integer Programming and Combinatorial Optimization Conference (IPCO)}
\renewcommand{\IPCO}{IPCO}
\newcommand{\ISAAC}{International Symposium on Algorithms and Computation (ISAAC)}
\renewcommand{\ISAAC}{ISAAC}
\newcommand{\ISTCS}{Israel Symposium on Theory of Computing and Systems}
\newcommand{\JACM}{Journal of the ACM}
\newcommand{\LNCS}{Lecture Notes in Computer Science}
\newcommand{\MOR}{Mathematics of Operations Research}
\newcommand{\SICOMP}{SIAM Journal on Computing}
\newcommand{\SIJDM}{SIAM Journal on Discrete Mathematics}
\newcommand{\SODA}{Annual ACM-SIAM Symposium on Discrete Algorithms (SODA)}
\renewcommand{\SODA}{ACM-SIAM SODA}
\newcommand{\SPAA}{Annual ACM Symposium on Parallel Algorithms and Architectures (SPAA)}
\newcommand{\STACS}{Annual Symposium on Theoretical Aspects of Computer Science (STACS)}
\newcommand{\STOC}{Annual ACM Symposium on Theory of Computing (STOC)}
\renewcommand{\STOC}{ACM STOC}
\newcommand{\SWAT}{Scandinavian Workshop on Algorithm Theory (SWAT)}
\renewcommand{\SWAT}{SWAT}
\newcommand{\TCS}{Theoretical Computer Science}

\newcommand{\Proc}{Proceedings of the }
\renewcommand{\Proc}{Proc. }
%===============================================================

%%%%%%%%%%%%%%%%%%%%%%%%%%%%%%%%%%%%%%%%%%%%%%%%%%%%%%%%%%%%%%%%%%%%%%%%%%%%%%
\end{document}